\colorlet{exercisecolor}{green!70!black!50!brown}
\newtheorem{proposition}{Proposition}
\def\C{{\bf C}}
\newcommand{\ssigma}{\boldsymbol{\Sigma}}
\newcommand{\ttheta}{\boldsymbol{\theta}}
\newcommand{\y}{\boldsymbol{y}}
\newcommand{\be}{\boldsymbol{e}}
\newcommand{\Di}{\boldsymbol{D}}
\title{Estimating Mean Viral Load Trajectory from Intermittent Longitudinal Data and  Unknown Time Origins}
\author[1,2]{Yonatan Woodbridge\thanks{Corresponding Author}}
\author[3]{Micha Mandel}
\author[4]{Yair Goldberg}
\author[1,5]{Amit Huppert}
\affil[1]{The Gertner Institute for Epidemiology \& Health Policy Research, Sheba Medical Center, Ramat Gan, Israel}
\affil[2]{Department of Computer Science, Holon Institute of Technology, Holon, Israel}
\affil[3]{Department of Statistics and Data Science, The Hebrew University of Jerusalem, Jerusalem, Israel}
\affil[4]{Faculty of Industrial Engineering and Management, Technion - Israel Institute of Technology, Haifa, Israel}
\affil[5]{Faculty of Medicine, Tel Aviv University, Israel}
\begin{document}

\clearpage
\maketitle
\thispagestyle{empty}

\newpage

\begin{abstract} Viral load (VL) in the respiratory tract is the leading proxy for assessing infectiousness potential. Understanding the dynamics of disease-related VL within the host is very important and help to determine different policy and health recommendations. However, often only partial followup data are available with unknown infection date. In this paper we introduce a discrete time likelihood-based approach to modeling and estimating partial observed longitudinal samples. We model the VL trajectory by a multivariate normal distribution that accounts for possible correlation between measurements within individuals. We derive an expectation-maximization (EM) algorithm which treats the unknown time origins and the missing measurements as latent variables. 
Our main motivation is the reconstruction of the daily mean SARS-Cov-2 VL, given measurements performed on random patients, whose VL was measured multiple times on different days. The method is applied to SARS-Cov-2 cycle-threshold-value data collected in Israel.
\end{abstract}

Keywords: Ct-value, EM Algorithm, Multivariate Normal Distribution, SARS-Cov-2

\newpage 
\pagestyle{plain}
\setcounter{page}{1}

\section{Introduction}

The Viral load (VL) is the amount of viral nucleic acid within the host, expressed as the number of viral particles in a given volume. The typical measure unit is the cycle-threshold-value (Ct-value). In SARS-Cov-2, the typical Ct-value for a positive sample is between 15 and 40 representing the number of duplications required for the amount of viral genetic material to reach a certain detectable fluorescence threshold. It is inversely correlated to the VL, with a lower Ct-value indicating a higher VL. Sampling is done by taking a nasal swab conducted in testing centers or medical clinics, after which the sample is transferred to one of several special labs for analysis. % While some information regarding the patient can be collected during sampling, such as gender, age, symptoms, etc., data on the exact or estimated day of infection is missing, since it is usually unknown.

VL typically increases exponentially after infection, until reaching a peak after which it starts to decline exponentially  as a result of the host's immune response \citep{kissler2021viral}. Understanding the VL trajectory is of great interest, as it affects the rate of infectiousness, generation time and disease duration \citep{marks2021transmission}. Real time estimates of quantitative viral shedding dynamics will enable better evidence based public health interventions, such as lock-downs, length of quarantine, mask-wearing and other health-related policies.  Most VL studies are based on longitudinal data, making them difficult and expensive to conduct 
\citep{chia2022virological,ke2022daily,hay2022quantifying}. A second difficulty stems from the fact that for most cases the exact day of infection is unknown or uncertain, due to the study design which follows participants who have been constantly monitored after diagnosis \citep{hay2022quantifying,chia2022virological}. As a practical solution, the time origin can be  defined as the study-entry time, which mostly coincides with symptom appearance. 
In \cite{hay2022quantifying}, the study group was also monitored before infection onset, which allowed a more accurate determination of infection day and VL estimation. Nevertheless, continuous daily monitoring is both expensive and challenging, and statistical methods to estimate the VL typical curve under missing onset date and a partial follow-up are required. Motivated by recent SARS-Cov-2 viral load studies, we use  a large Israeli data set that was collected routinely and did not require a complex surveillance. We focus on discrete-time trajectory reconstruction, given incomplete longitudinal measurements from two or more different time points on each individual. %We use the cycle threshold (Ct) value obtained from PCR-RT tests which is inversely correlated with SARS--Cov-2 VL \cite{tom2020interpret}.
Our main goal is to provide basic practical guidelines for data collection and VL curve reconstruction that can be used in future outbreaks.

%Putting the problem in a more general framework of longitudinal studies, we consider a group of individuals who are repeatedly examined at certain time points \cite{diggle2002analysis}. The individual-specific reference point from which time is measured is known as the time origin \cite{cox2018analysis}, and must be defined appropriately before the analysis \cite{kom1997time,brookmeyer1987biases}. For instance, on examination of some disease-related biomarker, such as viral load, the natural time origin is the time of disease onset or initial infection, rather than individual's study entry time. However, in some cases the time origin is unknown or uncertain. This poses a challenge when time plays a crucial role; for instance, when the goal is to reconstruct a biomarker's trajectory, such as VL, over time. 

Several studies addressed the problem of unknown time origin in the context of HIV infections, since initial infection time is typically unknown (the time scale since infection until detection in HIV is months or years while for the case of SARS-Cov-2 it takes several days).  \cite{berman1990stochastic} studied the T4 level  trajectory from infection, %which declines with disease progression.  model the T4 cell decline after HIV infection, 
using a stochastic process with exponential damping function, treating the time from  infection to diagnosis as a latent variable. A similar approach was used in \cite{dubin1994estimation} to estimate the time from infection. Other models use empirical Bayes approaches \citep{degruttola1991modeling},
or biology-based dynamical models 
\citep{drylewicz2010modeling}  to investigate the progression of various biomarkers from initial infection. A recent paper \citep{wang2022time} proposed a likelihood-based estimation method for longitudinal trajectory estimation, which can be incorporated into survival models. 
%This method allows a random, individual-specific time domain, and a flexible curve using B-spline approximation.
This model can be applied to a wider range of datasets, as demonstrated on cervical dilation and medfly data.

In this paper, we study a discrete-time longitudinal trajectory estimation method. Similar to \cite{wang2022time}, our method is based on maximum likelihood estimation (MLE). However, here we consider the discrete-time scenario with only few measurements given for each individual.
Our main goal is to make use of data collected in routine surveillance and estimate the mean population trajectory; that is, the mean Ct at each time point, if all individual trajectories were aligned up, starting from a common time origin representing the infection time.
We incorporate in our model a covariance matrix which accounts for possible within-individual correlation, as in longitudinal data analysis or in linear mixed-effects models \citep{myers2012generalized}. Overall,
the unknown parameters included in the likelihood function are the means, covariance matrix, and the  probabilities of the time points at which measurements were performed. 
Using the normal distribution to model the measured values, we derive an EM algorithm for the MLE, which treats the unknown time-origin and the unknown measurements as latent variables. Owing to the discrete-time approach and the multivariate normal model, the EM procedure mostly involves closed form calculations and standard quadratic function minimizations.
The model is flexible and
various constraints can be imposed on the parameters, such as unimodality and structured covariance matrices. We show that under mild conditions, the model is identifiable in all its parameters.

%Owing to the discrete-time approach,one can easily introduce covariates and a smoothing curve into the model, as long as the linear dependence on the unknown parameters is preserved. The model allows a flexible choice of the time intervals and includes additional variance parameters for each time interval, as well as a prior probability function for the time at which the first observation (i.e, study entry) occurs.

%The likelihood function of the model is similar to that of a 2-dimensional Gaussian mixture model (GMM). Unlike GMM though, the order of the so called clusters (or Gaussian components) must be accounted for, as in standard regression problems. This has implication on the model identifiability, as we shall later prove. 

Our model formulation is motivated by SARS-Cov-2 VL studies. We use a dataset on numerous PCR tests and VL measurements conducted by a major Israeli lab, between January to June 2022. Many individuals had their VL measured twice or more on different days. Although the  day of infection is unknown, 
the large number of samples enable reconstructing the population's VL trajectory using our model formulation and assumptions. We compare our results to known results in the literature.

The paper is organized as follows. Section 2 formulates the model, the missing data-generating process, and derives the likelihood function. Section 3 proves that under certain mild conditions, the parameters are identifiable. Section 4 develops an EM algorithm with and without structural constraints on the model's parameters. Section 5 reports results of numerical simulations, as well as applies the method on  SARS-Cov-2 data. Section 6 summarizes the paper and discusses limitations, and future directions.

\section{Model}

Let $x$ be an integer denoting the day following the time origin, and let $y_x$ be the outcome of interest as measured on day $x$. We assume a model of the form $y_x=\theta_x + \epsilon_x$, where $\theta_x$ $(x=1,\ldots,2d-1)$ are fixed parameters, and $(\epsilon_1,...,\epsilon_{2d-1}) \sim F$ are zero-mean random variables which are possibly correlated; the use of vectors of length $2d-1$ will be explained in the sequel. We concentrate on the multivariate normal case $F=\mathcal{N}(0,\ssigma)$, where $\ssigma\in \mathbbm{R}^{2d-1\times 2d-1}$ is a covariance matrix.  The parameters  $\ttheta=(\theta_1,...,\theta_{2d-1})$ and $\ssigma$ are unknown, and should be estimated using incomplete data, as described below.

For $n$ statistically independent samples $(y_{i,1},...,y_{i,2d-1})\sim\mathcal{N}(\ttheta,\ssigma)$,  parameter
estimation by maximum likelihood is straightforward. Suppose now that only $m_i \in \{1,\ldots,d\}$ components from each vector are observed, but their indexing is unknown. Concretely, the data comprise of $n$ vectors of observations $\{(\tilde{y}_{i,1},\ldots,\tilde{y}_{i,m_i})\}_{i=1}^n$ where $\tilde{y}_{i,j}=y_{i,x_{i,j}}$. However, instead of fully observing the indices $(x_{i,1},\ldots,x_{i,m_i})$, we assume that $x_{i,1}\in\{1,\ldots,d\}$ and that only the difference between the indices $\Delta_{i,j}$ are available:
\begin{equation*}
    \Delta_{i,j} = x_{i,j+1}-x_{i,j},
\end{equation*}
for $j=1,\ldots,m_i-1$. For our motivating example, the assumptions state that the first observation is during the ``infection" period, but the exact day post infection is unknown, therefore only the differences between measurement times are exactly observed. If $m_i=1$ we formally define $\Delta_{i,j}=0$. Otherwise,
since the $x_{i,j}$'s are integers, $\Delta_{i,j}$'s are also positive integers satisfying $\sum_j\Delta_{i,j}=x_{i,m_i}-x_{i,1}<d$. 
The goal is to estimate $\ttheta$ and $\ssigma$ based on  $\{(\tilde y_{i,1},\ldots,\tilde y_{i,m_i}, \Delta_{i,1},\ldots,\Delta_{i,m_i-1})\}_{i=1,..,n}$.

The model considers $d$ possible time points during which the infection is  ``active". However, in practice $d$ is often unknown, and the latest observation time $x_{i,m_i}$ might be larger than $d$. In fact, as $x_{i,1}=d$ and $x_{i,m_i}-x_{i,1}=d-1$ are both possible, $x_{i,m_i}$ can be as large as $2d-1$. In typical settings, such as that of VL, it is reasonable to assume a ``steady state" after time $d$, so we propose the following convention:
The means and standard deviations from time $d$ onward are all equal, that is, $\theta_j=\theta_{d}$ and $\sigma^2_j=\sigma^2_{d}$ for $j=d,d+1,\ldots,2d-1$.
%Similarly, the $d$-th column (row) of $\ssigma$ represents the correlation between two measurements, %where the first (second) measurement is taken at $d$ onward.
%To that end, we define a threshold function over $x_{i,2}$:
%\begin{equation}
%    \tau_{d}(x_{i,2})= 
%    \begin{cases}
%    x_{i,2} \quad \text{if} \: x_{i,2}<d \\
%    d \quad \text{if} \: x_{i,2}\geq d 
%    \end{cases}
%    \label{tau11}
%\end{equation}
%For instance, if $x_{i,1}=d-1$, then   
%$x_{i,2}=x_{i,1}+\Delta_i\ge d$ for any $\Delta_i>0$, which implies
%$\tau_{d}(x_{i,2})=d$. 
%This approach allows selecting any integer for $d$ regardless of the value of $K$ (the upper bound of $\Delta_{ij}$). However, 
As we will see in the numerical simulations, selecting a proper $d$ is important for the estimate's accuracy.

%Since our model does not account for more than one measurement per individual taken at the same time %point (note that every individual has a single measurement vector $(y_{i,1},...,y_{i,d})$), we must %assume that $x_{i,1}\leq d-1$. Otherwise, if $x_{i,1}=d$, then from $\taud$ convention we get that %$y_{i,1},y_{i,2}$ belong to the same time point $d$. This condition is a close  approximation if the %number of cases $x_{i,1}\geq d$ is negligible, having a little effect on the estimate.  

To construct the likelihood, let $q_j= p(x_{i,1}=j)$
be the probability of the first observed time $x_{i,1}$ supported on $1,...,d$. Assume a non-informative selection mechanism in a sense that given $(x_{i,1},\ldots,x_{i,m_i})$, the observed data $(\tilde y_{i,1},\ldots,\tilde y_{i,m_i})$ follows the model $y_{x_j}=\theta_{x_j} + \epsilon_{x_j}$ as described above (i.e, there is no selection bias). While this assumption is strong, it holds in a controlled experiment, which can be conducted during a pandemic in order to support policy decisions. We comment more on this in the Discussion section.

For ease of notation, we use a single index $j$ to denote the variance parameters (i.e., the diagonal values of $\ssigma$), such that:
\begin{equation}
    \text{diag} (\ssigma)= (\sigma_1^2,...,\sigma_{2d-1}^2).
    \label{diagsigma}
\end{equation}
For the off diagonal covariance terms, we use two indices: $[\ssigma]_{k,\ell}=\sigma_{k\ell}$.
Given $x_{i,1}=j$, the vector $\tilde{\y}_i = (\tilde y_{i,1},\ldots,\tilde y_{i,m_i})$ has an $m_i$-variate normal distribution with mean $\ttheta_{i|j}=(\theta_j,\theta_{j+\Delta_{i,1}},\ldots,\theta_{j+\Delta_{i,m_i-1}})$, and covariance
\begin{equation}
    \ssigma_{i|j}=
    \begin{pmatrix}
        \sigma_{j}^2 & \cdots & 
        \sigma_{j,j+\Delta_{i,m_i-1}} \\
        \vdots & \ddots & \vdots \\
        \sigma_{j,j+\Delta_{i,m_i-1}} & \cdots & \sigma_{j+\Delta_{i,m_i-1}}^2
    \end{pmatrix}.
    \label{submatrix1}
\end{equation}
As $x_{i,1}$ is not observed, the likelihood is a mixture of normal densities:
\begin{equation}
   \mathcal{L}_m=
    \prod\limits_{i=1}^n
    \sum\limits_{j=1}^{d}
    \frac{q_j}{(2\pi)^{m_i/2} \:
    \sqrt{|\ssigma_{i|j}|}\:
    }
    \text{exp}
    \bigg{(}
    -\frac{1}{2}
    (\tilde{\y}_i -\ttheta_{i|j})^T
    \ssigma_{i|j}^{-1}
   (\tilde{\y}_i -\ttheta_{i|j})
    \bigg{)}.
\label{LM}
\end{equation}
Direct maximization of \eqref{LM} is difficult. However, treating $x_{i,1}$ and the unobserved components within $(y_{i,1},...,y_{i,2d-1})$  as latent variables, we derive in the next section an EM algorithm  \citep{dempster1977maximum} that is composed of standard calculation steps.

\section{Model Identification}
In this section, we focus on the case where only pairs of samples $(\tilde{y}_{i,1},\tilde{y}_{i,2})$ are given, together with the difference $\Delta_{i1}$. We prove that under certain uniqueness conditions, the model is identifiable. As a matter of convenience, we consider $i=1$ and omit the indices, and use  
the notation $\mathcal{N}(\cdot;\cdot)$ to mark the marginal bivariate Normal p.d.f with given parameters. Thus, for a given $\Delta$, the joint mixture distribution is
\begin{equation}
    \mathcal{L}_{\ttheta,\ssigma,\boldsymbol{q}}(\tilde{y}_{1},\tilde{y}_{2},\Delta
    )=
    \sum\limits_{j=1}^{d} 
    q_j 
    \mathcal{N} \Big{(}
    \tilde{y}_{1}, \tilde{y}_{2}
    ;
    \ttheta_{1|j}
    ,\ssigma_{1|j}
     \Big{)}.
     \label{Gaussform}
\end{equation} 
The mean and covariance $\ttheta_{1|j},\ssigma_{1|j}$ are defined in Section 2 (see \eqref{submatrix1} and the discussion above it). 
To prove model identifiability, we show that if $\mathcal{L}_{\ttheta,\ssigma,\boldsymbol{q}}(\tilde{y}_{1},\tilde{y}_{2},\Delta)=\mathcal{L}_{\ttheta^{'},\ssigma^{'},\boldsymbol{q}^{'}}(\tilde{y}_{1},\tilde{y}_{2},\Delta)$ for all $\tilde{y}_{1},\tilde{y}_{2} \in \mathbbm{R}$ and $\Delta\in \{1,...,d\}$, then  $\ttheta=\ttheta^{'}$,
$\ssigma=\ssigma^{'}$, and
$\boldsymbol{q}=\boldsymbol{q}^{'}$.

Without any constraints, the model is not identifiable. For example, if $q_1=0$ then clearly $\theta_1$ and $\sigma^2_1$ are not identifiable. Similarly, if with probability 1 $\Delta\ne j$ for some $1\le j \le d-1$, then the $(k,k+j)$th ($k=1,\ldots,d-j$) elements of $\ssigma$ are not identifiable.
However, the model can be partially identifiable if we assume the following reasonable conditions:
\begin{itemize}
    \item \textbf{Condition 1}. 
    The pairs $(\theta_j,\sigma^2_j)$ $j=1,\ldots,d$
are unique.
\item \textbf{Condition 2}. The mixing probabilities $q_j$ are positive for all $j=1,...,d$.
\end{itemize}

The proof uses results on identifiability of finite mixture models  \citep{teicher1963identifiability,yakowitz1968identifiability}. 
Specifically, \cite{yakowitz1968identifiability} proved that an $n$-dimensional Gaussian mixture model (GMM) is identifiable up to permutations of indices. Taking the special 2-dimensional case, \cite{yakowitz1968identifiability} shows that if a GMM p.d.f with $d$ different Gaussian components is
identical to another, such that
$$\sum\limits_{j=1}^d \alpha_j \mathcal{N}(x;\boldsymbol{\mu}_j,\boldsymbol{Q}_j)=\sum\limits_{j=1}^d \alpha'_j \mathcal{N}(x;\boldsymbol{\mu}'_j,\boldsymbol{Q}'_j), \quad \forall x\in\mathbbm{R}^2,$$
then a permutation $\pi$ exists satisfying $\pi(\alpha'_1,...,\alpha'_d)=(\alpha_1,...,\alpha_d)$, 
$\pi(\boldsymbol{\mu}'_1,...,\boldsymbol{\mu}'_d)=(\boldsymbol{\mu}_1,...,\boldsymbol{\mu}_d)$ and $\pi(\boldsymbol{Q}'_1,...,\boldsymbol{Q}'_d)=(\boldsymbol{Q}_1,...,\boldsymbol{Q}_d)$. 

Our model does not fall into the GMM framework, as only bivariate marginals of the $2d-1$ dimensional normal distribution are observed. Nevertheless, we can use the identifiability result of GMM to prove identification in our setting.

\begin{proposition} \label{prop:iden}
Suppose that $P(\Delta=j)>0$ for  $j=1,\ldots,d-1$. Under conditions 1-2,
model
\eqref{Gaussform}  is identifiable in  $\ttheta,\ssigma$ and $\boldsymbol{q}$.
\end{proposition}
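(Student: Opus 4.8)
The plan is to reduce to the bivariate Gaussian-mixture identifiability result of \citep{yakowitz1968identifiability} by freezing $\Delta$ at each admissible value, and then to eliminate the residual permutation ambiguity using the shift structure that ties the second coordinate back to the first. Throughout, both parameter sets are assumed to satisfy Conditions 1--2, the two densities in \eqref{Gaussform} are assumed equal for every $\tilde y_1,\tilde y_2\in\mathbbm{R}$ and every $\Delta\in\{1,\ldots,d-1\}$, and each such $\Delta$ occurs with positive probability by hypothesis.

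First I would condition on $\Delta=1$. By \eqref{Gaussform} this is a bivariate mixture of the $d$ Gaussians with means $(\theta_j,\theta_{j+1})$ and the corresponding $2\times2$ blocks of $\ssigma$, weighted by $q_1,\ldots,q_d$. Condition 1 makes the first-coordinate marginals $\mathcal N(\theta_j,\sigma_j^2)$ pairwise distinct, so the $d$ components are genuinely different bivariate Gaussians, and Condition 2 guarantees that exactly $d$ of them are present on each side. The cited theorem therefore supplies a permutation $\rho$ with $q_j=q'_{\rho(j)}$ and with the $j$th unprimed component equal to the $\rho(j)$th primed component; comparing marginals gives $(\theta_j,\sigma_j^2)=(\theta'_{\rho(j)},(\sigma'_{\rho(j)})^2)$ from the first coordinate and $(\theta_{j+1},\sigma_{j+1}^2)=(\theta'_{\rho(j)+1},(\sigma'_{\rho(j)+1})^2)$ from the second.

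The decisive step is to prove $\rho=\mathrm{id}$. For $j\le d-1$ the component indexed $j+1$ is also present, and its first-coordinate marginal gives $(\theta_{j+1},\sigma_{j+1}^2)=(\theta'_{\rho(j+1)},(\sigma'_{\rho(j+1)})^2)$. Equating this with the second-coordinate relation above and invoking the uniqueness of the primed pairs (Condition 1) yields the recursion $\rho(j+1)=\rho(j)+1$. The only delicate point is the boundary at index $d$: were $\rho(j)=d$ for some $j<d$, the steady-state convention $(\theta'_{d+1},(\sigma'_{d+1})^2)=(\theta'_{d},(\sigma'_{d})^2)$ would force $\rho(j+1)=d=\rho(j)$, contradicting injectivity; hence $\rho(d)=d$, and unwinding the recursion downward gives $\rho=\mathrm{id}$. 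Consequently $\theta_j=\theta'_j$, $\sigma_j^2=(\sigma'_j)^2$ and $q_j=q'_j$ for all $j$ (the remaining means and variances being tied to index $d$ by the steady-state convention), and matching the off-diagonal blocks at $\Delta=1$ also gives $\sigma_{j,j+1}=\sigma'_{j,j+1}$.

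It remains to recover the longer-range covariances. For each $\delta\in\{1,\ldots,d-1\}$ I would repeat the argument conditional on $\Delta=\delta$ (legitimate since $P(\Delta=\delta)>0$): the theorem again produces a matching permutation, which by the first-marginal comparison together with $\theta_j=\theta'_j$, $\sigma_j^2=(\sigma'_j)^2$ and Condition 1 must be the identity, so component-wise equality of the $2\times2$ covariances gives $\sigma_{j,j+\delta}=\sigma'_{j,j+\delta}$ for all admissible $j$. Letting $\delta$ range over $1,\ldots,d-1$ exhausts every covariance entry that appears in the likelihood, whence $\ssigma=\ssigma'$ and the proof concludes. I expect the identity-permutation step to be the crux: the mixture theorem only identifies parameters up to relabeling, and because the $\theta_j$ carry temporal meaning this ambiguity would be fatal, so the real work is in using the second-coordinate shift --- with the steady-state boundary handled carefully --- to force the labeling to agree.
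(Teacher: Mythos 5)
Your proof is correct and follows essentially the same route as the paper's: reduce to the Yakowitz--Spragins identifiability of the bivariate Gaussian mixture at $\Delta=1$, use Condition 1 together with the shifted means $(\theta_j,\theta_{j+1})$ to eliminate the permutation ambiguity, and then repeat the argument at each lag $\delta$ to recover the remaining covariances. Your explicit recursion $\rho(j+1)=\rho(j)+1$ with the boundary check at index $d$ is simply a more careful rendering of the step the paper phrases as ``the pair $(\theta_1,\sigma_1^2)$ appears only in one of the mixtures, which identifies the first index.''
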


\begin{proof}

For $\Delta=1$, we observe the mixture
\begin{equation} \label{eq:mixbivariate}
    \sum\limits_{j=1}^{d} 
    q_j 
    \mathcal{N} \Big{(}
     \tilde{y}_{1},
     \tilde{y}_{1}
    ;
    \begin{psmallmatrix}
     \theta_j \\ 
     \theta_{j+1}
    \end{psmallmatrix},
    \begin{psmallmatrix}
     \sigma^2_{j} & \sigma_{j,j+1} \\
     \sigma_{j,j+1}
     & \sigma^2_{j+1}
    \end{psmallmatrix}
     \Big{)}.
\end{equation}
By identifiability of GMM models, the bivariate distributions in \eqref{eq:mixbivariate} and the mixing probabilities $q_1,\ldots,q_d$ are all identifiable up to an indices permutation. By the first condition, the pair $(\theta_1,\sigma^2_1)$ appears only in one of the mixtures, which identifies the first index. This also identifies $q_1$, $(\theta_2,\sigma^2_2)$ and $\sigma_{1,2}$ via identification of $\mathcal{N} \Big{(}
     \tilde{y}_{1}, \tilde{y}_{1} ;\begin{psmallmatrix}
     \theta_1 \\  \theta_{2}
    \end{psmallmatrix}, \begin{psmallmatrix}
     \sigma^2_{1} & \sigma_{1,2} \\
     \sigma_{1,2}  & \sigma^1_{2}
    \end{psmallmatrix}
     \Big{)}.$ 
Once $(\theta_2,\sigma^2_2)$ is identified, also $(\theta_3,\sigma^2_3)$ and $\sigma_{2,3}$ are identified via identification of $\mathcal{N} \Big{(}
     \tilde{y}_{1}, \tilde{y}_{1} ;\begin{psmallmatrix}
     \theta_2 \\  \theta_{3}
    \end{psmallmatrix}, \begin{psmallmatrix}
     \sigma^2_{2} & \sigma_{2,3} \\
     \sigma_{2,3}  & \sigma^1_{3}
    \end{psmallmatrix}
     \Big{)}.$  
Continuing with the same reasoning shows identifiability of the mixing probabilities and all means, variances and single-lag correlations. Repeating the same arguments for $\Delta=k$ ($2\le k\le d-1$) establishes identification of the $k$-lag correlations.
\end{proof}

\section{An EM algorithm}

\subsection{The unconstrained model}
In terms of an EM algorithm, we define the complete likelihood to be of the time origin $x_{i,1}$ and all measurements $\y_i=(y_{i,1},\ldots,y_{i,2d-1})^T$ of individual $i$. 
Since $\y_i\sim\mathcal{N}(\ttheta,\ssigma)$ $i=1,\ldots,n$ are independent,
the complete likelihood simply becomes:
\begin{equation}
    \mathcal{L}_c = 
    \prod\limits_{i=1}^n
    \prod\limits_{j=1}^{d}
    \bigg{[}    \frac{q_j}{(2\pi)^{(2d-1)/2} \:
    \sqrt{|\ssigma|}
    }
    \text{exp}
    \bigg{(}
   -
   \frac{1}{2}
   (\y_i-\ttheta)^T 
   \ssigma^{-1}
(\y_i-\ttheta)
    \bigg{)}
    \bigg{]}^{\mathbbm{1}_{x_{i,1}=j}} 
    \label{Lc},
\end{equation}
where ${\mathbbm{1}_{x_{i,1}=j}}$ indicates the event $\{x_{i,1}=j\}$. The conditional expectation of  $-\text{log} \mathcal{L}_c$ is calculated in the E-step, which is then maximized over the unknown parameters in the M-step. These steps are repeated until convergence.

\textbf{E-step}.
Denote by ${\Theta}^{(t)}$
the set of estimated parameters obtained at the $t$-th iteration, which contains the mean vector ${\ttheta}^{(t)}$, the covariance ${\ssigma}^{(t)}$, and the prior probability parameters $\boldsymbol{q}^{(t)}=\big{\{}
q_1^{(t)},...,q_{d}^{(t)}\big{\}}$.
For ease of notation,
we use $D_i$ 
to mark the observed data on subject $i$, i.e $\{D_i\}_{i=1\ldots,n}\equiv\{(\tilde y_{i,1},\ldots,\tilde y_{i,m_i}, \Delta_{i,1},\ldots,\Delta_{i,m_i-1})\}_{i=1,..,n}$,
and the boldface letter $\boldsymbol{D}=\cup_iD_i$
to mark the entire observed data.

The negative log of \eqref{Lc}, without constant terms, can be written as:
\begin{equation}
    -\text{log}
    \Big{(}
    \mathcal{L}_c
    \Big{)}=
    \sum\limits_{i=1}^n
    \sum\limits_{j=1}^{d}
    \mathbbm{1}_{x_{i,1}=j} 
    \Bigg{\{}
    \frac{1}{2}\text{log} (|\ssigma|) +
    \frac{1}{2}
    (\y_i-\ttheta)^T 
   \ssigma^{-1}
(\y_i-\ttheta)
-\text{log}\: (q_j)
    \Bigg{\}}.
    \label{logelc}
\end{equation}
Computing the conditional expectation of \eqref{logelc} requires the calculation of $\mathbbm{E}_{{\Theta}^{(t)}}
    \Big{(}
    \mathbbm{
    1}_{x_{i,1}=j
    }
    \Big{|}
    \boldsymbol{D}
    \Big{)}$ and 
$\mathbbm{E}_{{\Theta}^{(t)}}
    \Big{(}
    \mathbbm{
    1}_{x_{i,1}=j
    }  f(\y_i)
    \Big{|}
   \boldsymbol{D}
    \Big{)}$ for a linear and a quadratic function  $f(\cdot)$.
Denoting the former as $E_{ij}^{(t)}$, we have: 
\begin{equation*}
    E^{(t)}_{ij}  = p_{\Theta^{(t)}}(x_{i,1}=j \mid \boldsymbol{D}) =
    \frac{p_{\Theta^{(t)}}(D_i \mid x_{i,1}=j) q_j^{(t)}}
    { \sum_{k=1}^{d}  p_{\Theta^{(t)}}(D_i \mid x_{i,1}=k)q_k^{(t)} }.
\end{equation*}
The expectation $E^{(t)}_{ij}$ is readily obtained by recalling that $D_i \mid \{x_{i,1}=k\}$ has a normal distribution with mean $\theta_{i|k}$ and covariance matrix $\Sigma_{i|k}$; see \eqref{submatrix1} and the discussion above it.

For the second expectation,  
$\mathbbm{E}_{{\Theta}^{(t)}}\Big{(}\mathbbm{ 1}_{x_{i,1}=j} f(\y_i)\mid\boldsymbol{D}\Big{)}$, we have:
\begin{equation*}
    \mathbbm{E}_{{\Theta}^{(t)}}\Big{(}\mathbbm{1}_{x_{i,1}=j}f(\y_i)
    \mid \boldsymbol{D}\Big{)}=
    \mathbbm{E}_{{\Theta}^{(t)}}\Big{(}f(\y_i)
    \mid x_{i,1}=j,D_i\Big{)} \times E^{(t)}_{ij}.
\end{equation*}    
The expectation $\mathbbm{E}_{{\Theta}^{(t)}}\Big{(}f(\y_i)
    \mid x_{i,1}=j,D_i\Big{)}$ involves calculation of the first two moments of $\y_i$ conditionally on the event $\{D_i,x_{i,1}=j\}$, which can be obtained using known properties of multivariate normal distributions. The details are deferred to Appendix I.

\textbf{M-step}. The M-step should minimize for $\ttheta,\ssigma$, and $q_1,\ldots,q_d$ the expression
\begin{equation}
    -\mathbbm{E}_{\Theta^{(t)}}
    \bigg{(}\text{log}(\mathcal{L}_c)\mid \Di\bigg{)} =
    \sum\limits_{i=1}^n
    \sum\limits_{j=1}^{d}
  E_{ij}^{(t)}
    \Big{(}
    \frac{1}{2}\text{log}(|\ssigma|)-\text{log}(q_j)
    +\frac{1}{2} 
    {\rm trace} \Big{(} \ssigma^{-1} \boldsymbol{Y}_{\ttheta,i|j}^{(t)}
    \Big{)} \Big{)},
    \label{eloglc}
\end{equation}    
where 
\begin{equation}
    \boldsymbol{Y}_{\ttheta,i|j}^{(t)}=
       \ttheta\ttheta^T    -
    \y^{(t)}_{i|j} \:
    \ttheta^T   -
    \ttheta \:(\y^{(t)}_{i|j})^T    +
   \C_{i|j}^{(t)},
   \label{yy}
\end{equation}
and $\y^{(t)}_{i|j}$ and $\C_{i|j}^{(t)}$ are defined in Appendix I.
Without any constraint, the minimum points  $\ttheta^{(t+1)}$ and $\ssigma^{(t+1)}$ of \eqref{eloglc} over $\ttheta$ and $\ssigma$ are similar to the sample mean and covariance. To see that, note that the $\ell$-th partial derivative of \eqref{eloglc} over $\ttheta$, in a trace form, is:
\begin{equation*}
\frac{1}{2}
\sum\limits_{i=1}^n
\sum\limits_{j=1}^d
E_{ij}^{(t)}
\text{trace} 
\Bigg{(}
\ssigma^{-1}
\bigg{(}
\ttheta 
\boldsymbol{e}_{\ell}^T +
 \boldsymbol{e}_{\ell} \ttheta^T
 -\y_{i|j}^{(t)}\boldsymbol{e}_{\ell}^T-\boldsymbol{e}_{\ell}
 (\y_{i|j}^{(t)})^T
\bigg{)}
\Bigg{)},
\end{equation*}
where $\boldsymbol{e}_{\ell}$ is the $\ell$-th unit basis vector.
It can be shown that the gradient becomes zero at the point
\begin{equation}
    \ttheta^{(t+1)}=
    \frac{1}{n}
    \sum\limits_{i=1}^n
    \sum\limits_{j=1}^{d}
    E_{ij}^{(t)}\boldsymbol{y}_{i|j}^{(t)},
    \label{ttheta}
\end{equation}
As for the covariance, the derivative of \eqref{eloglc} over $\ssigma^{-1}$ is:
\begin{equation*}
\frac{1}{2}
\sum\limits_{i=1}^n
\sum\limits_{j=1}^d
E_{ij}^{(t)}
\bigg{(}
\ssigma
-\boldsymbol{Y}_{\ttheta,i|j}^{(t)}
\bigg{)},
\end{equation*}
which becomes zero at
\begin{equation*}
\ssigma^{(t+1)}=
\frac{1}{n}
   \sum\limits_{i=1}^n
    \sum\limits_{j=1}^{d}
E_{ij}^{(t)}\boldsymbol{Y}_{\ttheta^{(t+1)},i|j}^{(t)}.
\end{equation*}
Lastly, the updated probability parameters are the sample proportions:
\begin{equation*}
    {q}^{(t+1)}_k = \frac{
    \sum_{i=1}^n E_{ik}^{(t)}
    }
    {
    \sum_{i=1}^n \sum_{j=1}^{d} E_{ij}^{(t)}
    } \quad k=1,...,d\;.
\end{equation*}

\subsection{Model constraints} \label{sec:constraint}
Since the model has many unknown parameters, it is helpful to specify structural constraints that are based on prior knowledge and are relevant to the problem's domain.
We discuss several constraints that are relevant to our motivation problem of VL reconstruction.

\textbf{Means}.
First note from \eqref{eloglc} and \eqref{yy} that estimating $\ttheta$ requires solving the following quadratic program:
\begin{equation}
    \min_{\ttheta} \:
    \left\{n\ttheta^T \ssigma^{-1} \ttheta -2\ttheta^T \ssigma^{-1}
    \Big{(}
    \sum\limits_{i=1}^n
    \sum\limits_{j=1}^{d}
E_{ij}^{(t)}\y_{i|j}^{(t)} \Big{)} \right\}.
\label{quadratic}
\end{equation}

We have already made a structural constraint in the model formulation (Section 2), by assuming all means $\theta_{j}$ for $j\ge d$  are equal to $\theta_d$. This reflects the fact that VL stabilizes after recovery. A natural asumption is a unimodal model, as the VL is expected to increase initially until reaching a maximal point, and then to decrease over time. Concretely, assuming a peak at time $d_{max}\in\{1,...,d\}$, we impose the linear constraints $\theta_i\leq \theta_{i+1} \quad \forall i=1,...,d_{max}-1$ and
$\theta_i\geq \theta_{i+1} \quad \forall i=d_{max},...,2d-1$. 
If $d_{max}$ is unknown, we propose to run the EM-algorithm for every $d_{max}=1,...,d$ and select the estimate with the largest
likelihood value.

To further reduce the number of parameters, we focus on a family of unimodal functions specified by three parameters $\alpha_1,\alpha_2,\alpha_3$:
\begin{equation}
    \theta_k=
    \alpha_1 k^{\alpha_2-1}e^{-k/\alpha_3}, \quad 
    \alpha_1,\alpha_2,\alpha_3>0; \quad k=1,\ldots,d,
    \label{thetax}
\end{equation}
and $\theta_k=\theta_d$ for $k\ge d$. These unimodal functions, which we refer to as the unimodal Gamma (due to their similarity to the Gamma function), 
can smoothly capture the exponential growth, followed by an exponential decline rate, which characterizes most Ct-value and VL trajectories \citep{ke2022daily}. 
In that case,
minimization of \eqref{quadratic} can be carried out using a grid search over
$\alpha_1,\alpha_2$ and $\alpha_3$.

%Alternatively, we can assume that $\ttheta$ is some linear combination of the form $\ttheta=\boldsymbol{X}\boldsymbol{\beta}$, where $\boldsymbol{X}$ is a known matrix that depends on the time points, and $\boldsymbol{\beta}$ is an unknown parameter vector. The polynomial and the piecewise linear functions of $\ttheta$ over time are special cases of this model. The objective yet becomes another quadratic function over $\boldsymbol{\beta}$:
%\begin{equation*}
%    \min_{\boldsymbol{\beta}} \:
%    n\boldsymbol{\beta}^T \boldsymbol{X}^T \ssigma^{-1} \boldsymbol{X}
%    \boldsymbol{\beta} -2
%    \boldsymbol{\beta}^T \boldsymbol{X}^T
%    \ssigma^{-1}
%    \Big{(}
%    \sum\limits_{i=1}^n
%    \sum\limits_{j=1}^{d-1}
%    E_{ij}^{(t)}e_{i|j}^{(t)} %\Big{)}.
%\end{equation*}

\textbf{Covariance}. 
Within individual dependence should be taken into account when samples contain longitudinal measurements per individual. It is natural to assume a model of the form $\ssigma=\sigma^2R$, where $\sigma^2$ is the variance and $R$ the correlation matrix. Various models for the covariance matrix have been suggested, see for example Chapter 7 of \cite{fitzmaurice2012applied}. Since correlation tends to vanish as time difference increases, we choose to focus on a parsimonious first-order autoregressive model $\ssigma=\sigma^2R(\rho)$ where ${\rm Corr}(Y_{ij},Y_{ik})=\rho^{|k-j|}$.
%\begin{equation*}
%    \ssigma=\sigma
%    \begin{psmallmatrix}
%1 & \rho & \rho^2 & \cdots & \cdots &  \rho^{2d-1}\\
%\rho & 1 & \rho & \rho^2 & & \vdots\\
%\vdots & & \ddots & \ddots & \ddots  & \vdots\\
%& & & & &
%\\
%& & &   &  \ddots &
%\\
%\rho^{2d-1} & \hdots & \hdots & \hdots &  \hdots  & 1 \\
%\end{psmallmatrix}=\sigma \times \ssigma(\rho)
%\end{equation*}
The M-step now involves the minimization of \eqref{eloglc} over $\ssigma$ under the constraint $\ssigma=\sigma^2 R(\rho)$, which leads to:
\begin{equation*}
    \min_{\sigma^2,\rho} \:\:
    (2d-1)n
    \log \sigma^2 
    +
    n\log R(\rho)
    + \frac{1}{\sigma^2} 
    \text{trace}
    \Bigg{(}
    R^{-1}(\rho)
    \bigg{(}
    \sum_{i=1}^n
    \sum_{j=1}^d 
    E_{i|j}^{(t)} 
    \boldsymbol{Y}^{(t)}_{\ttheta,i|j}
    \bigg{)}
    \Bigg{)}
\end{equation*}
For a fixed $\rho_0\in (-1,1)$, simple differentiation shows that  
$$\sigma^2_0=
\frac{1}{(2d-1)n}
\text{trace}
    \Bigg{(}
    R^{-1}(\rho_0)
    \bigg{(}
    \sum_{i=1}^n
    \sum_{j=1}^d 
    E_{i|j}^{(t)} 
    \boldsymbol{Y}^{(t)}_{\ttheta,i|j}
    \bigg{)}
    \Bigg{)}$$
minimizes the above function.  A line search procedure can be applied in order to find the minimum point by calculating $(\rho_0,\sigma_0)$ in a grid over $(-1,1)$, and selecting the point 
that gives the smallest objective value.

Another family of covariance matrices of interest has the linear form
\begin{equation}
    \ssigma(\boldsymbol{\beta}) = \sum_{j=1}^J \beta_j B_j,
\label{mixedeffect}
\end{equation}
where $B_j$ ($j=1,\ldots,J$) are known symmetric matrices. An example  is the heteroskedastic model in which ${\rm Var}(Y_{ij})=\sigma^2_j$. Here $B_j$ for $j=1'\ldots,d$ is a matrix with 1 in the $j$th entry of the diagonal and 0 in all other entries. If in addition ${\rm Cov}(Y_{ij},Y_{ik})=\sigma_{|k-j|}$ depends on the lag between observations, matrices having 1 in entries $(j,j\pm k)$ and 0 otherwise can be added to \eqref{mixedeffect}.
A fixed point procedure for estimating the coefficient vector $\boldsymbol{\beta}$ of a linear covariance model was proposed in \cite{anderson1973asymptotically}; we briefly describe the implementation of this algorithm to our problem in Appendix II. 

\subsection{Technical Remarks}

Since the Ct-value is inversely correlated to the VL, and it is upper bounded by 40, we use the negative linear transformation $y_{ij}=40-{\rm Ct}_{ij}$, where ${\rm Ct}_{ij}$ is the Ct-value of subject $i$ measured on day $j$. The values are highly variable and together with non-convexity of the likelihood can lead to poor estimation. Reducing the number of parameters by using prior knowledge on the expected Ct curve, and specifically focusing on the family of unimodal functions \eqref{thetax}, helps solving the problem to some extent. 

To further deal with the non-convexity of the likelihood, we apply the following procedure. First, we run the algorithm several times, starting from random initializations of all parameters. Second, for each set of initial values, instead of directly estimating the parameters under model \eqref{thetax}, we found it better to first estimate $\ttheta$ under a unimodal constraint, as discussed in Section \ref{sec:constraint}, and use the results as initial values after fitting the function \eqref{thetax} to the estimated means. Specifically, in our data studies we generated five sets of estimates of $(\ttheta,\ssigma,\boldsymbol{q})$ using the EM-algorithm under a unimodal constraint on $\ttheta$ assuming $d_{max}=m$ for $2\leq m \leq 6$, which is its expected range, and then use the results as initial values for the EM algorithm to the unimodal Gamma model  \eqref{thetax}, choosing the estimate that gives the largest likelihood value. The final estimate is the one that maximizes the likelihood over all replications.

\section{Numerical simulations}

\subsection{Performance of the Method}

\cite{hay2022quantifying} measured the Ct-values continuously on a group of NBA players, providing a small sample of real world Ct data. We use this dataset as a baseline for our simulation study. 
For our first simulation setting, we approximated the sample mean of the daily Ct-value by fitting \eqref{thetax}.  
We constructed an
AR(1) covariance matrix by setting $\sigma^2={10},\rho=0.9$ and $\ssigma=\sigma^2 R(\rho)$; 
see discussion about covariance in Section 4.1. Lastly, we defined a vector of decreasing values $\boldsymbol{q}$ in the simplex. We set these parameters as the ground truth when generating the samples and calculating the performance of the method. The parameters $\ttheta$ and $\boldsymbol{q}$ that were used in the simulation are shown as the solid red lines in Figure \ref{fig:synth1}(a) and \ref{fig:synth1}(c), respectively. 

The choice of $\Delta$ is important for the performance of the estimator and can be determined by policy makers in future pandemic-like scenarios. We assume only paired data ($m_i=2$ for all $i$), and compare three distributions for the $\Delta$ values: uniformly distributed over $\{1,...,14\}$, uniformly distributed over $\{2,3\}$, and a decreasing distribution shown by the histogram in Figure \ref{fig:synth1}(c).
To generate a sample, we randomly select integer $1\leq x_{i,1}\leq 14$ using $\boldsymbol{q}$, and  sample $\Delta$ according to its distribution. We then generate $\tilde{y}_{i,1},\tilde{y}_{i,2}$ from a bivariate normal distribution  parameterized by the corresponding means in $\ttheta$ and the $2\times 2$ sub-covariance of $\ssigma$. A scatter plot of the samples $(x_{i,1},\tilde y_{i,1}),(x_{i,2},\tilde y_{i,2})$  is shown in
Figure \ref{fig:synth1}(a).

\begin{figure}[t]
  \begin{minipage}{\textwidth}
    \centering
    \subcaptionbox{synthetic samples}
    {\includegraphics[width=0.45\textwidth]{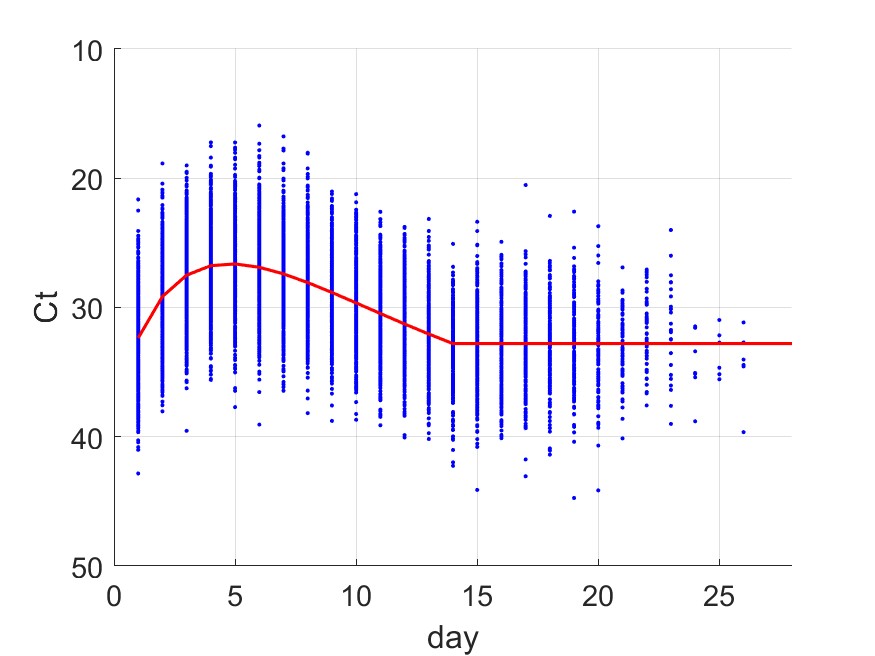}}\quad
    \subcaptionbox{semi-synthetic samples}
     {\includegraphics[width=0.45\textwidth]{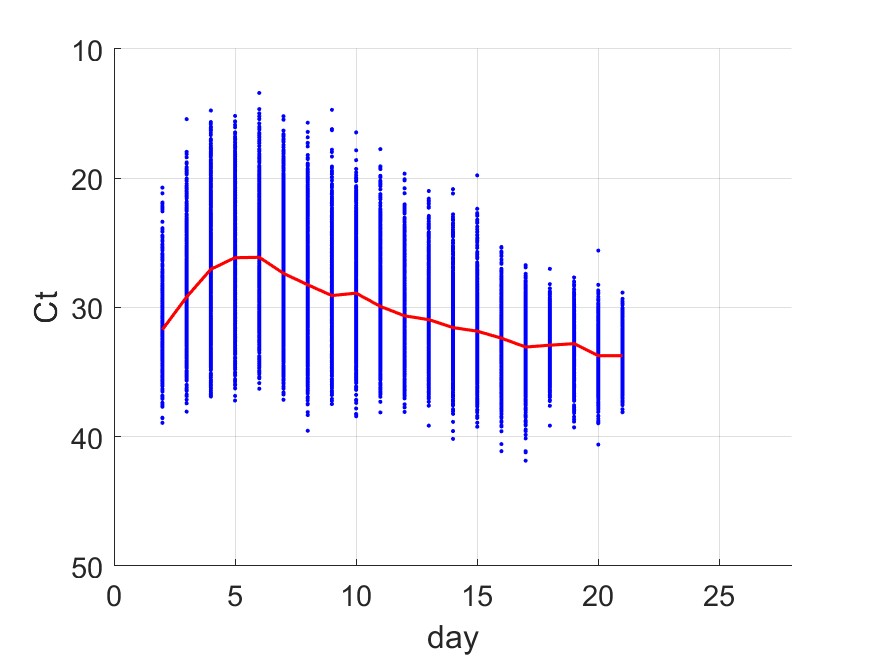}}
     \\
     \centering
    \subcaptionbox{synthetic parameters}
    {\includegraphics[width=0.45\textwidth]{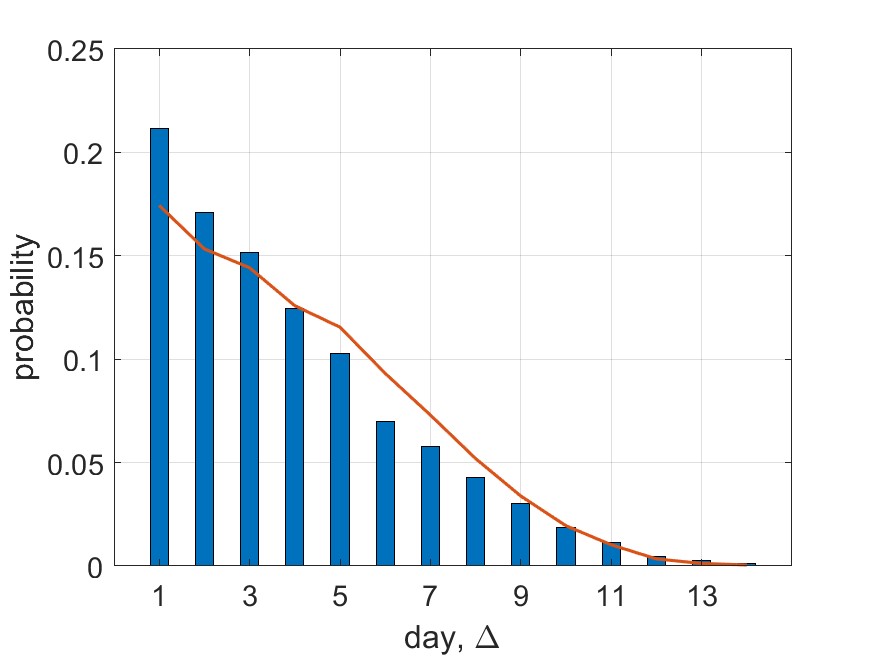}}\quad
    \subcaptionbox{semi-synthetic distributions}
     {\includegraphics[width=0.45\textwidth]{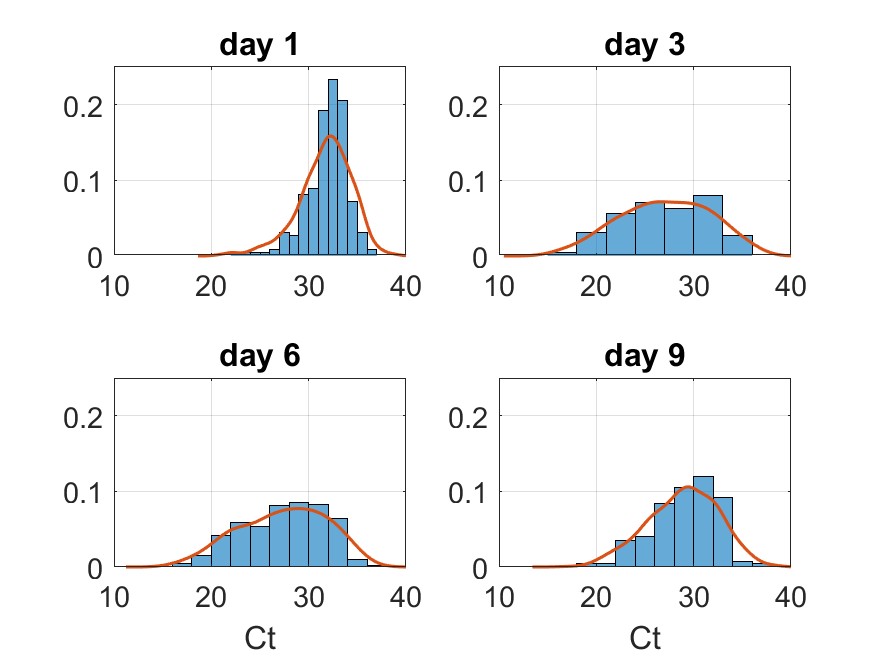}}
    \caption[]{Top row: Ct-value plotted against  day after infection. Blue dots are samples generated synthetically (a) and semi-synthetically (b). Solid red lines represent the daily mean Ct-values. 
    (c): Synthetic sample parameters: $\boldsymbol{q}$ (red line) and $p(\Delta)$ (histogram). (d): Histograms of the Ct-values as provided in \cite{hay2022quantifying} with the density estimation of the
    corresponding
    semi-synthetic Ct-value  distributions (solid red lines), on days 1,3,6 and 9.}
    \label{fig:synth1}
  \end{minipage}
\end{figure}

In order to better mimic the real world Ct-values and to study the performance of the estimator when the assumptions do not hold, we conducted an additional set of simulations, this time sampling directly from the data provided by \cite{hay2022quantifying}. As mentioned before, the viral load of each participant was monitored on a daily basis, so information on the day of infection for each infected participant is quite accurate, and the Ct-values on almost all days following infection are known. As the number of NBA players participating in the study was small, basing the simulation on  individuals' data was impractical. Instead, we sampled independent Ct-values for 20 days using the marginal empirical distributions of the Ct-value data and added a subject-specific $\mathcal{N}(0,1)$ random number to the whole vector in order to add within individual correlations. To further deal with the small sample size, we added for each individual an additional noise for each coordinate using independent $\mathcal{N}(0,1)$ random numbers. The empirical distributions of the Ct-values on days $1,3,6$ and $9$, as well as the corresponding marginal distributions used in the simulation, are shown in Figure \ref{fig:synth1}(d).
After sampling the whole vector, we randomly select two entries from the distribution of $\Delta$ presented in  Figure \ref{fig:synth1}(c). Figure \ref{fig:synth1}(b) is the equivalence of Figure \ref{fig:synth1}(a) presenting the mean and the sampled data for this scenario. Of course, when estimating the parameters, only the data $(\tilde{y}_{i,1},\tilde{y}_{i,2},\Delta_i=x_{i,2}-x_{i,1})$ are used.

We quantify the estimates' accuracy using the normalized mean-square error (NMSE):
$$\text{NMSE}(\hat{\ttheta})=
    \|\hat{\ttheta}-\ttheta\|^2
    / 
    \|\ttheta\|^2.$$ 
%For the vector $\boldsymbol{q}$ it is similarly defined. %For the  matrix parameters $\ssigma,p(\Delta|j)$, we use instead the Frobenius norm.

We found that the unique-variance AR(1) structure for the covariance does not hold in that case, since the data is heteroscedastic. Indeed, simulations showed  no consistency of the estimate when assuming that structure.
%We note that the data is heteroscedastic. Thus, unique-variance AR(1) covariance is inappropriate. We also found that the estimate's accuracy does not improve with the number of samples.
We therefore chose to use the linear covariance model, 
where the main diagonal can have different values, thus accounting for heteroscedasticity. 
To account for within individual correlation we set the two closest off diagonal entries in $\ssigma$ to be non-zero with a unique value for each off-diagonal ($\ssigma_{i,j}=\sigma^2_i$ if $i=j$, $\ssigma_{i,j}={\rm c}_1$ if $|i-j|=1$, $\ssigma_{i,j}={\rm c}_2$ if $|i-j|=2$, and $\ssigma_{i,j}=0$ if $|i-j|\ge 3$).

\begin{figure}[h]
  \begin{minipage}{\textwidth}
    \centering
    \subcaptionbox{$\ttheta$}
    {\includegraphics[width=0.45\textwidth]{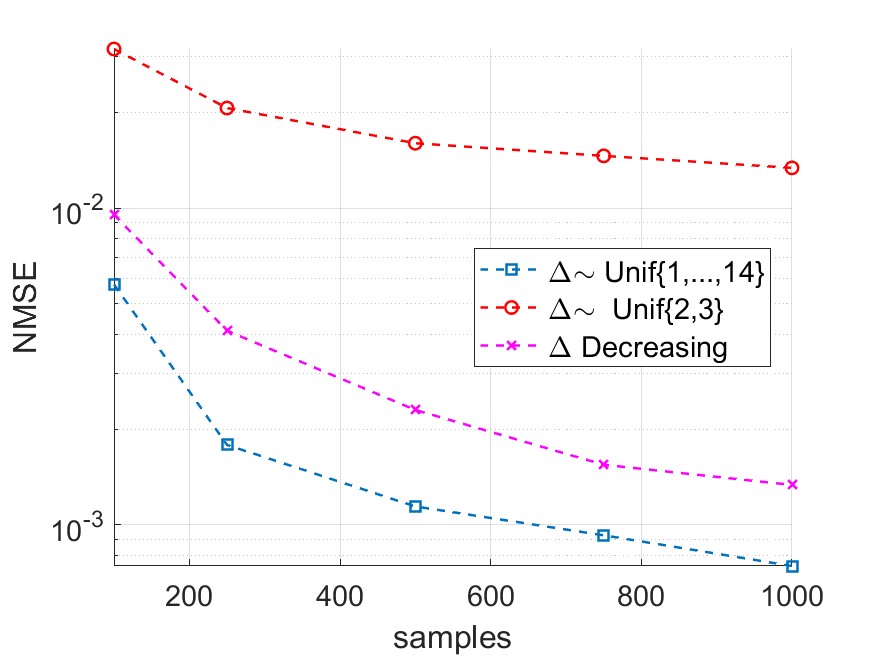}}\quad
    \subcaptionbox{$\sigma,\rho$}
     {\includegraphics[width=0.45\textwidth]{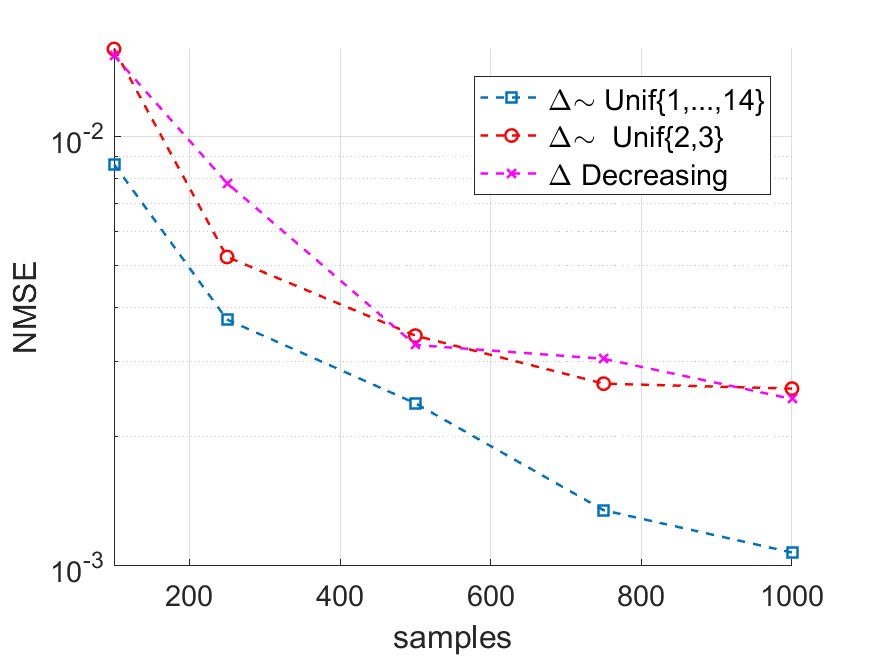}}
     \\
     \centering
    \subcaptionbox{$\boldsymbol{q}$}
    {\includegraphics[width=0.45\textwidth]{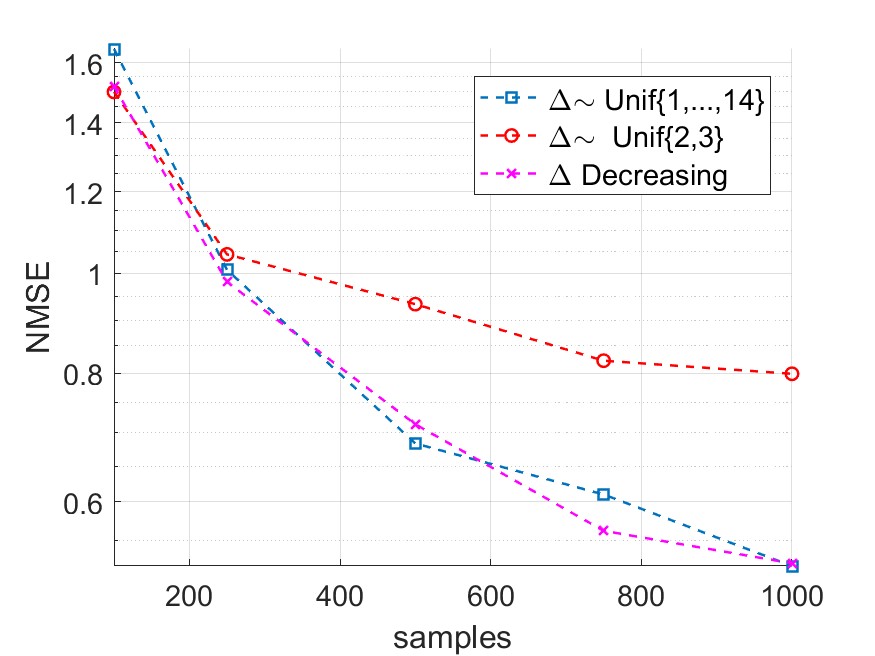}}\quad
    \subcaptionbox{plots of estimates}
     {\includegraphics[width=0.45\textwidth]{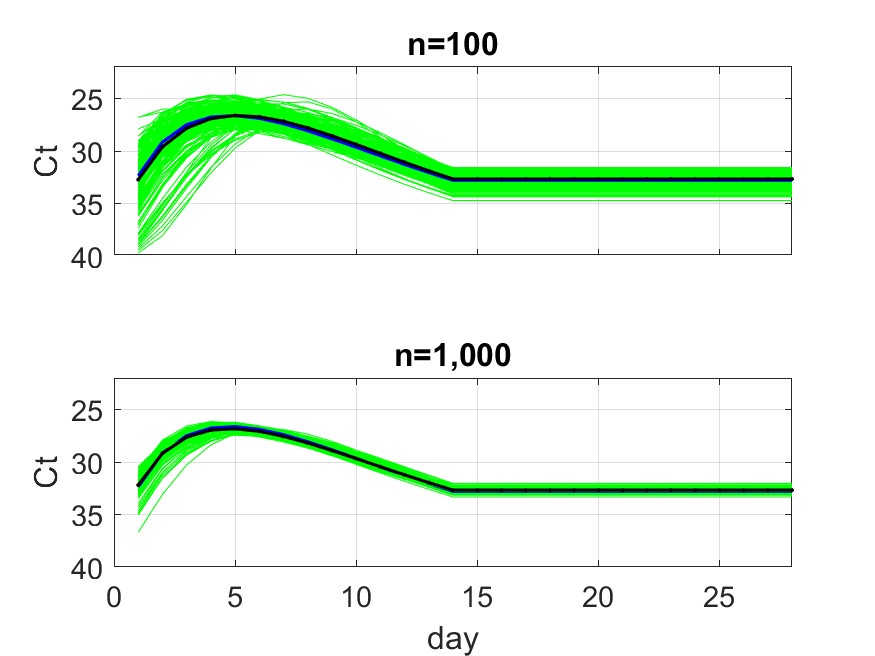}}
    \caption[]{Synthetic sample NMSE for the following estimates:   (a) $\ttheta$, (b) $\sigma,\rho$ in $\ssigma$  and (c) $\boldsymbol{q}$.
    (d): Plots of $\ttheta$ estimates of $200$ trials (green lines) for $n=100$ and $n=1,000$, together with their averages and  true $\ttheta$ (solid lines).}
    \label{fig:synth2}
  \end{minipage}
\end{figure}

\iffalse
\begin{minipage}{\textwidth}
%\begin{figure}[htbp]
    \centering
     {
        \includegraphics[width=0.475\textwidth] {images/theta.jpg}
        } \hspace*{-3em}
    {
        \includegraphics[width=0.475\textwidth] {images/sigma.jpg}
         } \\
        {
        \includegraphics[width=0.475\textwidth] {images/q.jpg}
        } \hspace*{-3em}
    {
        \includegraphics[width=0.475\textwidth] {images/synthetic100_1000.jpg}
         } \\
    \captionof{figure}{Synthetic samples NMSE for the estimates of   $\ttheta$, $\sigma,\rho$ in $\ssigma$ (top row) and $\boldsymbol{q}$ (bottom left).
    Semi synthetic data
    NMSE for $\ttheta,\ssigma,\boldsymbol{q}$  (bottom right).}
    \label{fig:synth2}
%\end{figure}
\end{minipage}
\fi

The NMSE results versus the number of samples are shown in Figure \ref{fig:synth2}. 
Each NMSE value is calculated as the average of 200 trials.
Figures \ref{fig:synth2}(a), \ref{fig:synth2}(b), and \ref{fig:synth2}(c) show NMSE results for $\ttheta$, $(\sigma,\rho)$ and $\boldsymbol{q}$ respectively, in the three different settings of $\Delta$. Figure \ref{fig:synth2}(d) displays all estimates of $\ttheta$  from the $200$ trials for $n=100$ and $n=1,000$ (green lines). As expected, the variance is lower when $n=1,000$. The average of $200$ estimates coincide with the true $\ttheta$ (solid lines), which suggest the estimate is approximately unbiased.
Estimation of $\boldsymbol{q}$ is more challenging as the values have a smaller scale (between 0 and 1). Interestingly, the estimates of all parameters perform better when $\Delta\sim \text{Unif}(1,...,14)$.

Figure \ref{fig:synth3}(a) shows the semi-synthetic samples' NMSE results for all three parameters. The NMSE results were averaged over 100 trials, a sufficient number to produce smooth curves in this case.
Here $\ssigma$ has a linear covariance structure, so the NMSE is computed via the Frobenius norm. As previously, $\ttheta$ estimate is more accurate in terms of NMSE compared to $\boldsymbol{q}$, which is more difficult to estimate. Figure \ref{fig:synth3}(b) shows $200$ estimates of $\ttheta$ together with thier average and true $\ttheta$, for $n=100$ and $n=1,000$.

\begin{figure}[h]
  \begin{minipage}{\textwidth}
    \centering
    \subcaptionbox{NMSE of three parameters}
    {\includegraphics[width=0.45\textwidth]{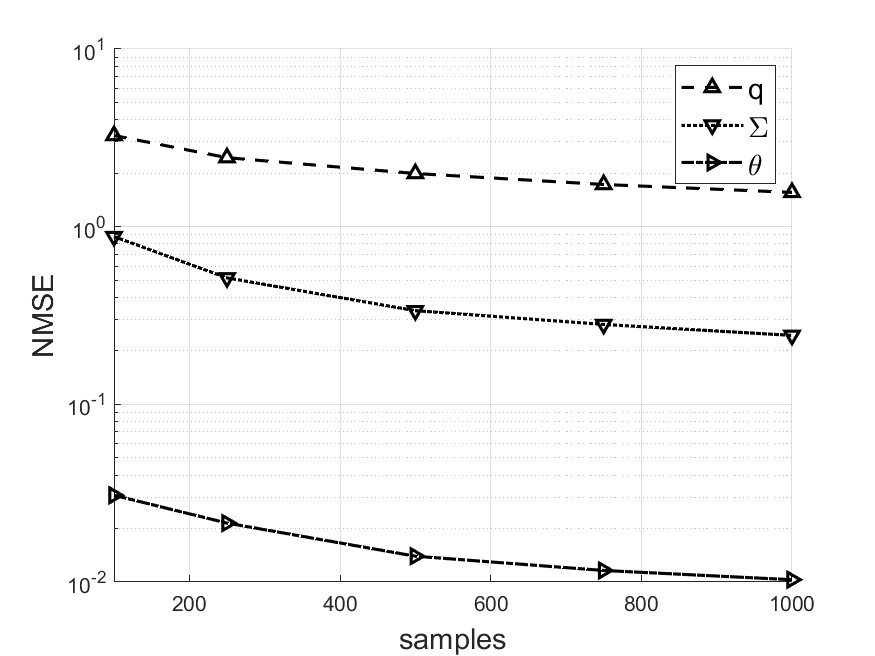}}\quad
    \subcaptionbox{plots of estimates}
     {\includegraphics[width=0.45\textwidth]{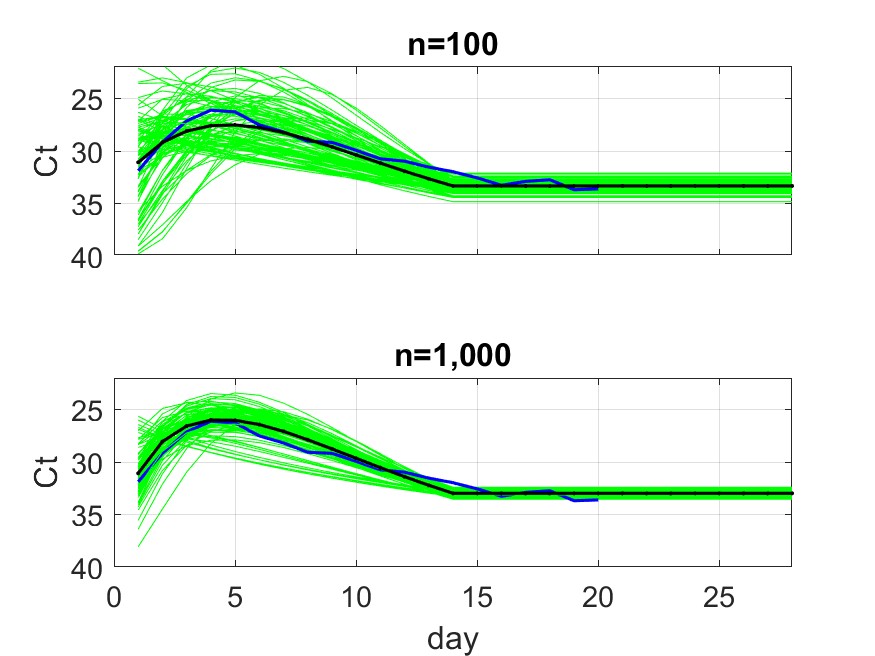}}
    \caption[]{(a): Semi-synthetic sample NMSE for the following  $\ttheta$,  $\sigma,\rho$ in $\ssigma$  and  $\boldsymbol{q}$. (b): Plots of $\ttheta$ estimates of $200$ trials (green lines) for $n=100$ and $n=1,000$, together with their averages and  true $\ttheta$ (solid lines).}
    \label{fig:synth3}
  \end{minipage}
\end{figure}

\subsection{Choice of $d$}
The latest day $d$ that limits the day after infection on which the first measurement was taken is somewhat arbitrary, reflecting a day on which the VL is low and individuals can be regarded as ``recovered". In this part, we examine how the choice of $d$ affects the estimate's accuracy. We use the same setting as in the first simulation model, using a uniform distribution of $\Delta$ over $\{1,...,14\}$. 
Generating a $100$ samples in each trial, we perform parameter estimation assuming various $d$ values: $d=7,10,14$ and $20$,  where $d=14$ is the ground-truth.

\begin{figure}[h]
  \begin{minipage}{\textwidth}
    \centering
    \subcaptionbox{Gamma}
    {\includegraphics[width=0.45\textwidth]{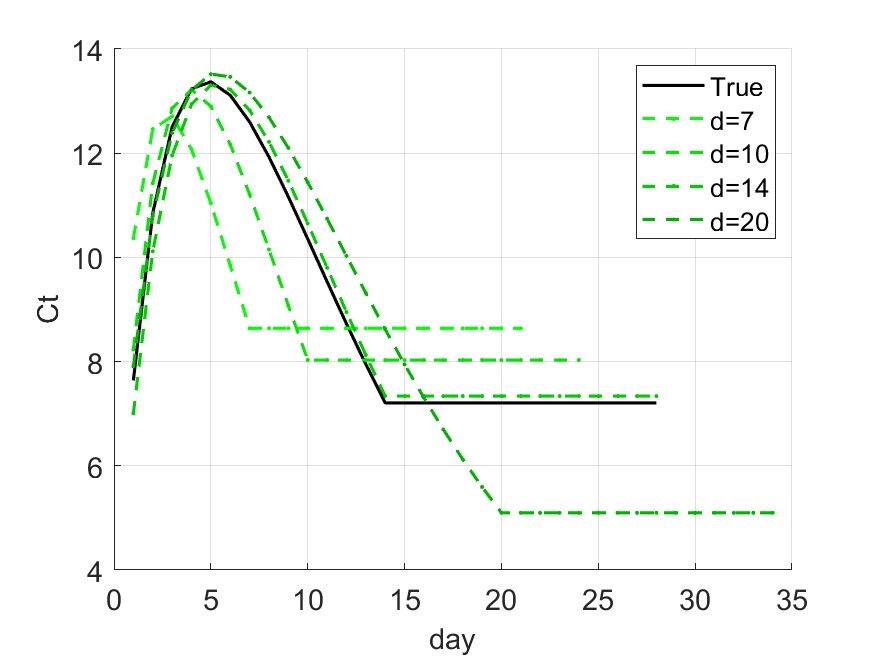}}\quad
    \subcaptionbox{Unimodal}
     {\includegraphics[width=0.45\textwidth]{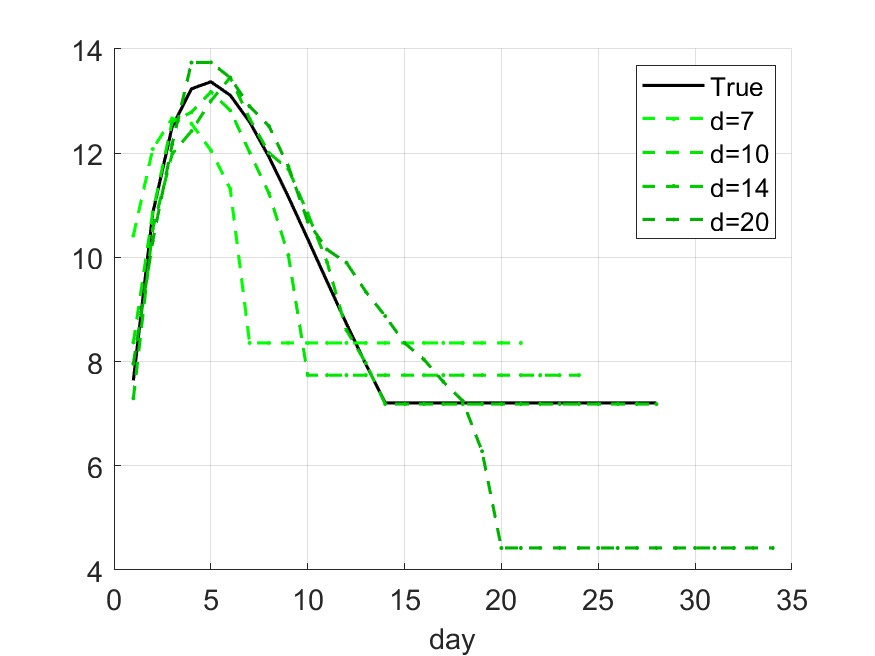}}
    \caption[]{(a): Plots of $\ttheta$ estimates
    for $d=7,10,14$ and $20$ assuming the unimodal Gamma function in \eqref{thetax}.
    (b): Plots of $\ttheta$ estimates assuming a general unimodal function.}
    \label{fig:choiced}
  \end{minipage}
\end{figure}

\iffalse
\begin{figure}[htbp]
    \centering
     {
        \includegraphics[width=0.475\textwidth] {images/thetachoice.jpg}
        \label{fig:img1} } \hspace*{-3em}
    {
        \includegraphics[width=0.475\textwidth] {images/qchoice.jpg}
        \label{fig:img2} } \\
    \caption{choice of $d$}
    \label{fig:choiced}
\end{figure}
\fi

Panel (a) of Figure \ref{fig:choiced} shows the average results of the four estimates of $\ttheta$  compared to the ground-truth under model \eqref{thetax}, and panel (b) shows the same for a unimodal-constrained model.
The estimate coincide with the true $\ttheta$ in the early days, while towards the end the estimates differ. In general, the model is more sensitive to $d$ values smaller than the true $d$. The value on which the VL is maximal, which is an important parameter, seems quite stable to the value of $d$, especially under the unimodal constraint.

\section{Viral load data}

The data contains Nucleocapsid gene ($N$-gene) Ct-values, measured by a major lab in Israel, on swab samples taken from patients who were tested positive for SARS-Cov-2.
The samples were taken between January to June, 2022, during which the SARS-Cov-2 infections were most likely caused by the Omicron variant. 
The data contains records on patients whose Ct-value was measured once, twice or multiple times on different days. Specifically, out of $222,668$  records,  about $97\%$ records contain a single Ct-measurement,  while about  $2.7\%, 0.26\%$, and $0.04\%$  respectively contain  pairs, triples, and quadruples of Ct-values measured on different days. These amount to over $6,000$ pairs, $580$ triples, and $89$  quadruples of Ct-values.

Since the Ct-values are documented only on patients who were tested positive, the  $\hat{\ttheta}$ estimate should be interpreted as the daily mean Ct-value of all infected persons; more formally, the expectation of the Ct-value conditioned on Ct-value less than $40$. This is different from the mean Ct-value trajectory which includes recovered persons. The latter is usually the focus of viral load studies, such as  \cite{hay2022quantifying,chia2022virological}. Nevertheless,  because the probability of recovery (i.e., for the Ct-value to reach $40$) is low 
on the early days following infection,
these two trajectories are most likely to coincide on these  days. In any case, the last day $d$ should be chosen carefully and the estimate should be interpreted accordingly.

\begin{figure}[h]
  \begin{minipage}{\textwidth}
    \centering
    \subcaptionbox{$n=6,653$}
    {\includegraphics[width=0.45\textwidth]{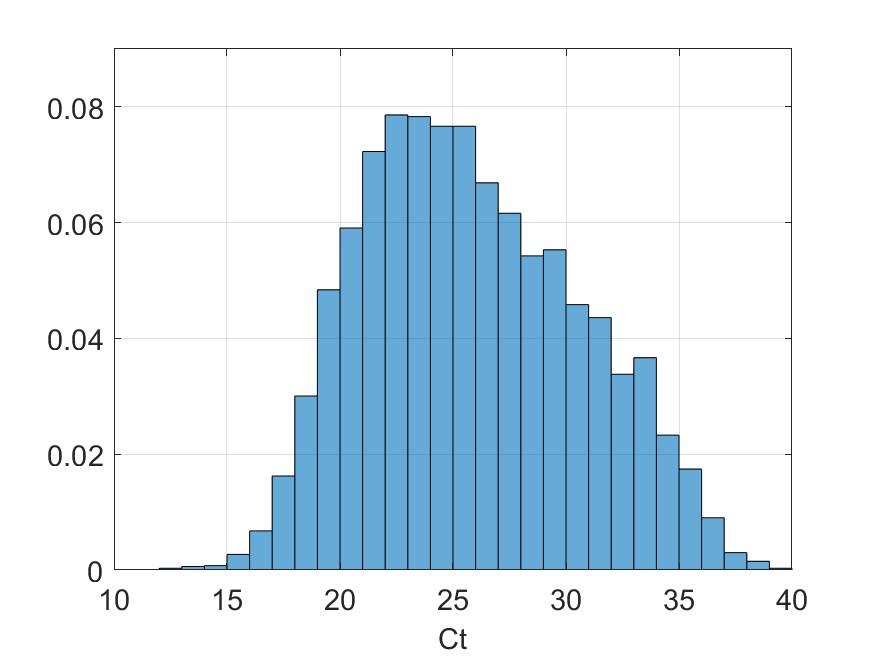}}\quad
    \subcaptionbox{$n=216,015$}
     {\includegraphics[width=0.45\textwidth]{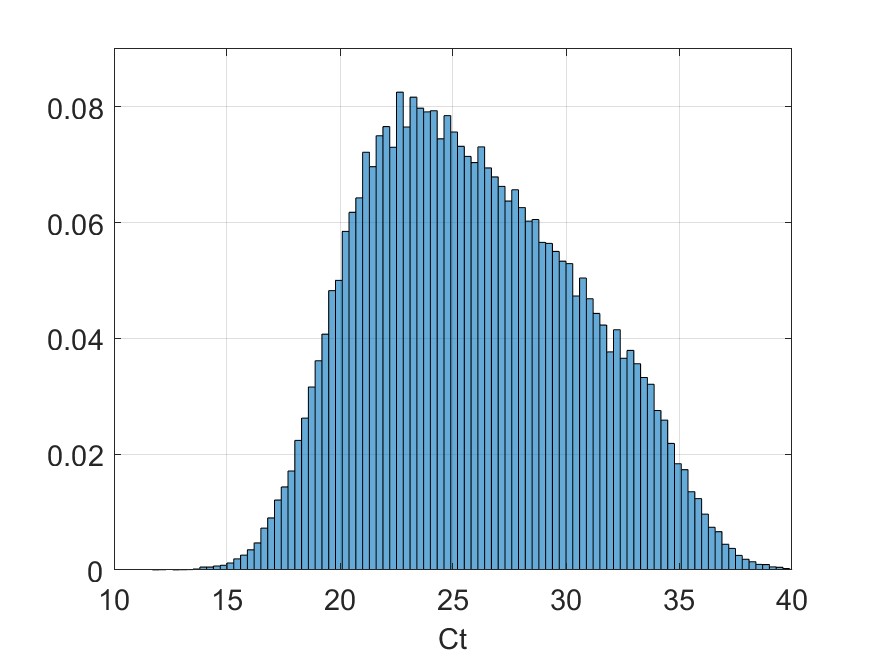}}
     \\
     \centering
    \subcaptionbox{$n=6,645$}
    {\includegraphics[width=0.45\textwidth]{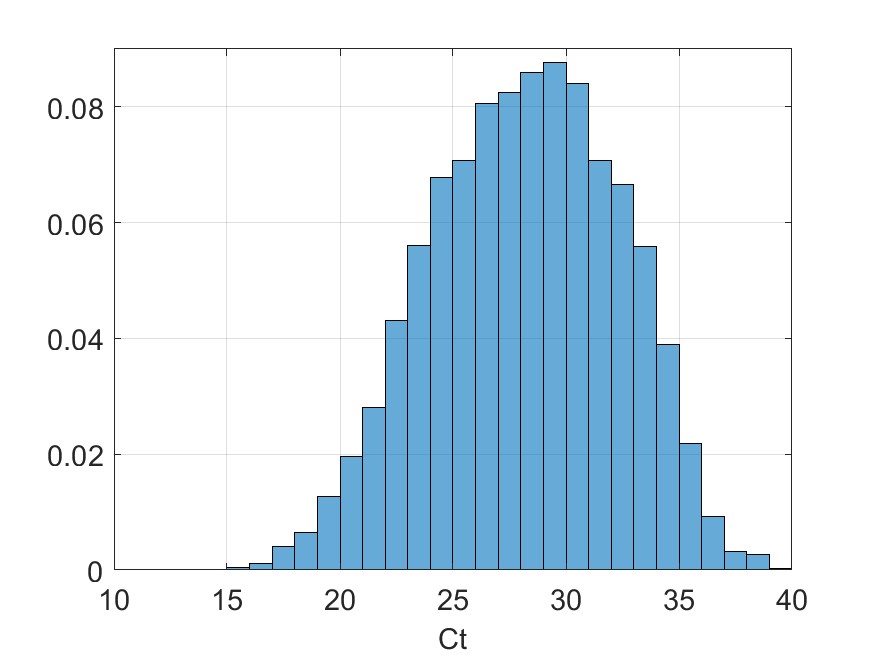}}\quad
    \subcaptionbox{$\Delta$}
     {\includegraphics[width=0.45\textwidth]{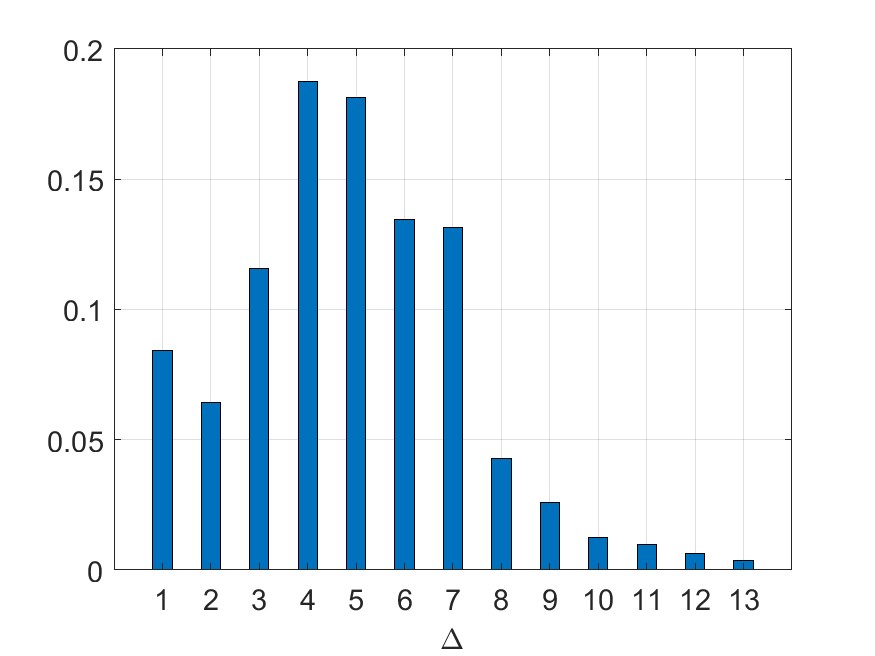}}
    \caption[]{Top row: distribution of first Ct-value measurement among: 
 persons who were tested multiple times (a), and persons who were tested once (b). Number of samples is given at the bottom.  (c): Distribution of second Ct-value measurement. (d):  Histogram of time difference $\Delta$ in days between the first and second measurement.}
    \label{fig:viral1}
  \end{minipage}
\end{figure}

Figure \ref{fig:viral1}(a) shows a histogram of the first Ct-value of individuals whose Ct-value was measured multiple times on different days (6,653 samples).  This histogram is similar to that of the Ct-value of individuals who were measured only once (Figure \ref{fig:viral1}(b), with 216,015 samples).
We also notice that although samples with Ct-value of $40$ are not included, still the tails decrease smoothly, without a sudden truncation at Ct-value of 40. 
Figure \ref{fig:viral1}(c) shows a histogram of the second Ct-value of individuals whose Ct-value was measured multiple times. The distribution is shifted more to the right, showing an average increase in Ct-value between the tests. As a larger Ct-value represents a smaller VL, this suggests that the second measurements were typically taken after the infection's peak. Figure \ref{fig:viral1}(d) shows the distribution of $\Delta$ - the difference in days between the first and  second measurements (samples with $\Delta>13$ were excluded; the number of such samples is negligible accounting for $1\%$ of all samples). Most of the $\Delta$ values are concentrated between days $1$ and $7$, which constitute about $90\%$ of all samples.

\begin{figure}[h]
  \begin{minipage}{\textwidth}
    \centering
    \subcaptionbox{Gamma}
    {\includegraphics[width=0.45\textwidth]{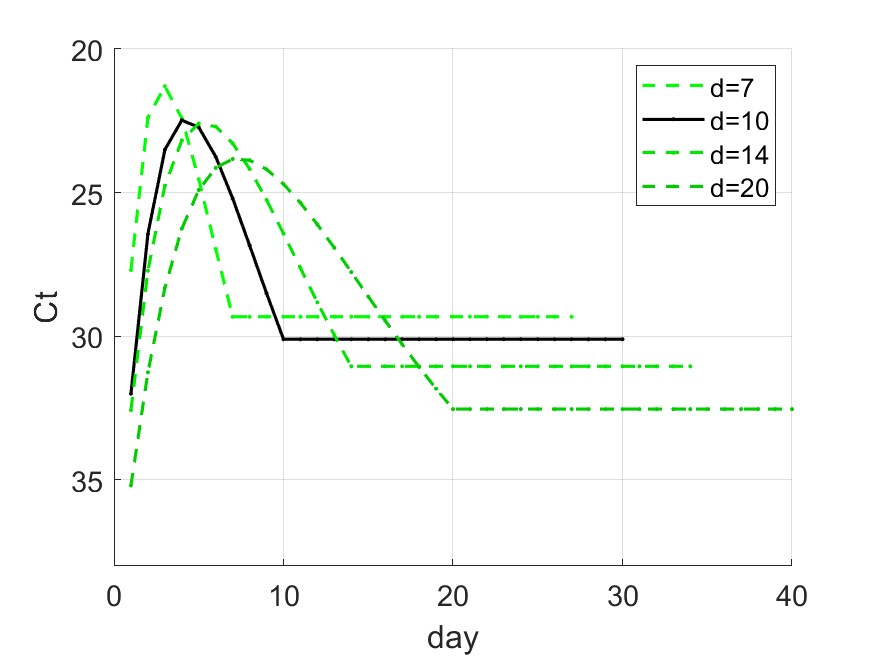}}\quad
    \subcaptionbox{Unimodal}
{\includegraphics[width=0.45\textwidth]{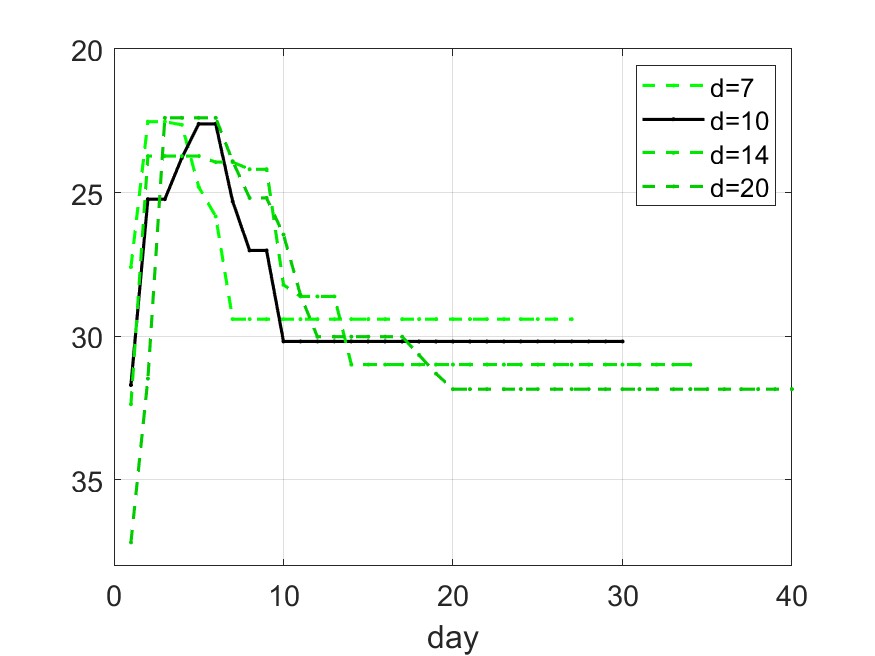}}
     %\\
     %\centering
    %\subcaptionbox{unimodal}{\includegraphics[width=0.45\textwidth]{images/realest234unimodal.jpg}}\quad\subcaptionbox{$d$}{\includegraphics[width=0.45\textwidth]{images/reallikelihood.jpg}}
\caption[]{(a): Daily Ct-value estimates 
    for $d=7,10,14$ and $20$ based on the parametric model \eqref{thetax}.
    (b): 
    %Corresponding $\sigma,\boldsymbol{q}$  estimates for $d=10$.
    %(c) 
    Daily Ct-value estimates for $d=7,10,14,20$ based on a unimodal constraint. 
    %(d)  log-likelihood values for the estimates in (a).
    }
    \label{fig:viral2}
  \end{minipage}
\end{figure}

We included all pairs, triples, and quadruples of Ct-values in estimation.
We used the structural assumptions set in the semi-synthetic simulation, namely the Gamma parametric class of $\ttheta$ and the linear covariance structure (due to possible heteroscedasticity).
Although the maximum value of $\Delta$ is 13, the maximum number of days after infection, $d$, that best fits the data is unknown. A natural choice is $d=14$; however, we also examine $d=7,10$ and $20$. 
For each $d$, we follow the same procedure as described previously, running the EM algorithm several times using different random initializations, and selecting the one that gives the largest likelihood value.
Since the number of samples is quite large, we use $1,000$ iterations as convergence might possibly be slower.

Figure \ref{fig:viral2}(a) shows the daily Ct-value estimation results for $d=7,10,14$ and $20$. Since $d=10$ gives the largest likelihood value %(see panel (d)), 
%(log likelihood values of $ -40214,   -40077  , -40119$ and  $-40198$ for $d=7,10,14$ %and $20$, respectively)
we regard this as the estimate of choice and it is shown as a black solid line.  The curve shows that the viral load reaches its peak during day 4  after infection (with Ct$\approx 23$), a result that is consistent with current knowledge (\cite{hay2022quantifying}). The Ct-value on day 10 reaches a value of 30, and it represents the average Ct-value on day 10 and onward. 
%Panel (b)  shows the corresponding variance and $\boldsymbol{q}$ estimates for $d=10$. 
Panel (b) shows the daily Ct-value estimates under unimodality constraints for various $d$, which are quite similar to the parametric Gamma curves shown in (a).

%We see that the Ct-value trajectories usually coincide for all combinations. Towards the end of infection, each trajectory stabilizes at a different Ct level. Despite this, most estimates agree  on the peak of infection, which occurs around day 5 after infection.
%The bottom-right panel shows the $\boldsymbol{q}$ estimates in bars, grouped by different $d=10,14,20$, using all sample pairs, triples and quadruples. We see that $\boldsymbol{q}$ is concentrated around days 1-5, suggesting that the first Ct-value measurement was most likely performed early after infection. The black solid line shows the standard deviation (SD) estimate $\{{\sigma_{i}}\}_{i=1,...,20}$ for $d=20$. Interestingly, the SD seems to increase on the last day, a pattern which is consistent with all estimates under the various settings (why??).

\section{Discussion}
VL studies usually involve longitudinal samples with unknown time-origin. By developing and applying a likelihood-based EM algorithm, we reconstruct the daily mean SARS-Cov-2  VL starting from the day of infection, using real samples mostly composed of pairs of measurements taken from infected persons. The resulting estimate is consistent with previous studies \citep{hay2022quantifying} which use an exact time-origin.

%The goal of this study was to develop a statistical method to reconstruct VL trajectories using routine surveillance data which will enable real time estimates of infectiousness. During the COVID-19 pandemic many policy related decisions where made without enough supporting evidence. For instances the length of quarantine of infected individuals which has serious economical and ethical implications should be based on rapid and accurate information as possible. more so, when a new variant of concern appears to is important to reevaluate the data and policies. Current methods to estimate the VL trajectory require special expensive and complex longitudinal studies which are time consuming. It is our hope that the current manuscript is a first step which will allow to both speed up the process and use routine surveillance.     

In the midst of the COVID-19 pandemic, crucial policy decisions have been made under the pressure of limited evidence. One such example is determining the appropriate length of quarantine for infected individuals – a decision that carries substantial economic and ethical consequences. In order to ensure responsible and effective policies, it is imperative to base these decisions on rapid and accurate information.
Estimating viral load (VL) trajectories – a key factor in understanding infectiousness – traditionally requires resource-intensive and complex longitudinal studies. These studies are not only time-consuming but often involve substantial costs and logistical challenges. These led to delays in obtaining critical data needed for informed policy adjustments.

The method presented here requires much less resources, is logistically much simpler than traditional methods and can provide fast and important estimates for the parameters of interest. Importantly, the method's reliance on routine surveillance data makes it more adaptable and feasible for rapid implementation. In the face of emerging variants of concern, the ability to promptly evaluate data and adjust policies becomes even more vital. The current approach represents a significant advancement, offering the promise of faster decision-making processes while maintaining scientific rigor. Although the method is based on strong modelling assumptions, it can provide good initial estimates that can be updated when more complete data are collected.

%The current manuscript addresses this challenge head-on by introducing a novel statistical method designed to reconstruct VL trajectories using routine surveillance data. This innovation holds the potential to revolutionize the way we assess infectiousness in real-time. By enabling the timely estimation of VL trajectories, the proposed method empowers policymakers with data-driven insights that can inform decisions regarding quarantine duration, public health measures, and resource allocation.

%viral load from infection  In this paper, we introduced a likelihood model that can be applied to longitudinal studies where the time origin is unknown. We showed that if multiple measurements are provided on each individual, the mean trajectory can be recovered. We derived an EM-based estimation method and evaluated it through synthetic data and real viral load data.  Various parameters constraints can be set; most importantly, this model accounts for within-individual sample correlation. However, several issues regarding sampling bias are left open for future research. Specifically, the group of subjects who were measured once can be different than those who were measured twice; our model does not incorporate single measurement cases, as well as dependence of $y_{i,1}$ upon $\Delta_i$. Another issue is incorporation of individuals who were measured more than twice; additional constraint must be set on $p(\Delta_i|j)$ in that case. All these are possible extensions of this model.

In the synthetic simulation we examined how the distribution of the time difference $\Delta$ between pairs of samples affects the accuracy of the estimate. We found that the uniformly distributed $\Delta$ gave a stable and accurate estimate, which suggest what the optimal $\Delta$  should be. A detailed analysis which also accounts for all other parameters can strengthen these result and is a possible future research direction. 
Another important issue is the case where the samples greatly depart from the normal distribution. This includes, for instance, the bimodal distribution that can be relevant if recovered persons, whose VL becomes a fixed zero, are included in the samples.

\section*{Acknowledgements}
The authors would like to thank Arnona Ziv for providing the data.

%\printbibliography
\bibliographystyle{plainnat}
\bibliography{ref}

\section*{Appendix I: The E-step}
We need to calculate terms such as
\begin{equation} \label{expnorm}
\mathbbm{E}_{{\Theta}^{(t)}}\Big{(}(\y_i-\ttheta)^T \ssigma^{-1}
(\y_i-\ttheta) \mid x_{i,1}=j,D_i\Big{)}.
\end{equation}
Using  $(\y_i-\ttheta)^T \ssigma^{-1}
(\y_i-\ttheta) = {\rm trace}(\ssigma^{-1}(\y_i-\ttheta)(\y_i-\ttheta)^T)$,
and the additive property of the trace and expectation functionals, \eqref{expnorm} can be simplified to 
\begin{equation} \label{expsplit}
{\rm trace}\{\ssigma^{-1}\mathbbm{E}_{{\Theta}^{(t)}}(\y_i\y_i^T \mid x_{i,1}=j,D_i)\} -2{\rm trace}\{\ssigma^{-1}\ttheta\mathbbm{E}_{{\Theta}^{(t)}}(\y_i^T \mid x_{i,1}=j,D_i)\} +  \ttheta^T\ssigma^{-1}\ttheta.
\end{equation}
Let $\y_{i|j}^T=[(\y_{i|j}^{\rm obs})^T,(\y_{i|j}^{\rm unobs})^T]$ be partitioned into its observed and unobserved parts for the case $x_{i,1}=j$, that is, 
$\y_{i|j}^{\rm obs}=(y_{i,j},y_{i,j+\Delta_{i1}},\ldots,y_{i,j+\Delta_{im_i-1}})^T$ and $\y_{i|j}^{\rm unobs}$ is the rest of the vector. Similarly, let $(\ttheta^{(t)})^T=[(\ttheta_{i|j}^{(t){\rm obs}})^T,(\ttheta_{i|j}^{(t){\rm unobs}})^T]$ and 
$$
\ssigma^{(t)}=
\begin{pmatrix}
\ssigma_{i|j}^{(t){\rm obs}} & \ssigma_{i|j}^{(t){\rm obs,unobs}} \\ (\ssigma_{i|j}^{(t){\rm obs,unobs}})^T & \ssigma_{i|j}^{(t){\rm unobs}}
\end{pmatrix}
$$
be the corresponding partition of the current (after iteration $t$) estimates of the mean and variance functions of $\y_i$. The expectations in \eqref{expsplit} should be calculated as the expectation of $\y_i$ conditional on $\y_{i|j}^{\rm unobs}$ using the probability law:
\begin{equation} \label{distatt}
\y_{i|j} \equiv \begin{pmatrix} \y_{i|j}^{\rm obs} \\ \y_{i|j}^{\rm unobs} \end{pmatrix} \sim N \left(
\begin{pmatrix} \ttheta_{i|j}^{(t){\rm obs}} \\ \ttheta_{i|j}^{(t){\rm unobs}} \end{pmatrix},
\begin{pmatrix}
\ssigma_{i|j}^{(t){\rm obs}} & \ssigma_{i|j}^{(t){\rm obs,unobs}} \\ (\ssigma_{i|j}^{(t){\rm obs,unobs}})^T & \ssigma_{i|j}^{(t){\rm unobs}}
\end{pmatrix}\right) \; .
\end{equation}
Using properties of the multivariate normal distribution:
\begin{align}
\be_{i|j}^{(t)} &\equiv \mathbbm{E}_{{\Theta}^{(t)}}(
\y_{i|j}^{\rm unobs} \mid x_{i,1}=j, \y_{i|j}^{\rm obs})
    =
     \ttheta^{(t){\rm unobs}}_{i|j}+
\big{(}\ssigma^{(t){\rm obs,unobs}}_{i|j} \big{)}^T  
(\ssigma_{i|j}^{(t){\rm obs}})^{-1}
(\y_{i|j}^{\rm obs}-\theta_{i|j}^{(t){\rm obs}})
    \nonumber
 \\  \label{ecov23} \\
V_{i|j}^{(t)} &\equiv  \text{Var}_{{\Theta}^{(t)}}(
\y_{i|j}^{\rm unobs} \mid x_{i,1}=j, \y_{i|j}^{\rm obs})
    =
   \ssigma^{(t){\rm unobs}}_{i|j} - \big{(}\ssigma^{(t){\rm obs,unobs}}_{i|j} \big{)}^T (\ssigma_{i|j}^{(t){\rm obs}})^{-1} \big{(}\ssigma^{(t){\rm obs,unobs}}_{i|j} \big{)}^T
     \nonumber
\end{align}

From \eqref{ecov23} we obtain the terms for \eqref{expsplit}:
\begin{equation} \label{EYnew}
\y_{i|j}^{(t)}=\pi_{i|j} \circ \mathbbm{E}_{{\Theta}^{(t)}}(\y_{i|j} \mid x_{i,1}=j,D_i) = \pi_{i|j} \circ [(\y_{i|j}^{\rm obs})^T, (\be_{i|j}^{(t)} )^T]^T,
\end{equation}
and
\begin{equation} \label{EY2new}
\C_{i|j}^{(t)}=\pi_{i|j} \circ \mathbbm{E}_{{\Theta}^{(t)}}(\y_{i|j}\y_{i|j}^T \mid x_{i,1}=j,D_i) = \pi_{i|j} \circ
\begin{pmatrix}
\y^{\rm obs}_{i|j}(\y^{\rm obs}_{i|j})^T & \y^{\rm obs}_{i|j}(\be^{(t)}_{i|j})^T \\ \be^{(t)}_{i|j}(\y^{\rm obs}_{i|j})^T & V^{(t)}_{i|j} + \be^{(t)}_{i|j}(\be^{(t)}_{i|j})^T
\end{pmatrix},
\end{equation}
where $\pi_{i|j}$ rearrange the components to the original indexing.

\section*{Appendix II: Linear covariance implementation}
Consider a covariance matrix with a linear structure
$\ssigma(\boldsymbol{\beta}) = \sum_{j=1}^J \beta_j B_j$. That is, $\ssigma(\boldsymbol{\beta})$ is a linear combination of $J$ known symmetric matrices
$B_j$ ($j=1,\ldots,J$).
Recall that at the $t+1$-th iteration, after computing $\ttheta^{(t+1)}$, we have $\boldsymbol{Y}^{(t+1)}=\frac{1}{n}\sum\limits_{i=1}^n \sum\limits_{j=1}^{d-1} E_{ij}^{(t)} \boldsymbol{Y}^{(t)}_{\ttheta^{(t+1)},i|j}$; see \eqref{yy}-\eqref{ttheta}). Given the above  linear structure, our goal is to solve:
\begin{equation*}
    \min_{\boldsymbol{\beta}} \text{log}|\ssigma|
    +\text{Tr} \Bigg{(}
    \ssigma^{-1} \boldsymbol{Y}^{(t+1)}
    \Bigg{)}\quad 
    s.t. \quad \ssigma=\sum\limits_{j=1}^p \beta_j \boldsymbol{B}_{j}.
\end{equation*}
Note that we isolated all terms which depend on $\ssigma$ in \eqref{eloglc}, and divided by $\sum\limits_{i=1}^n \sum\limits_{j=1}^{d-1} E_{ij}^{(t)}=n$. 
Following \cite{anderson1973asymptotically},  the minimum point is found by repeatedly solving the linear equation $\boldsymbol{G}^{(\ell)}\boldsymbol{\beta}^{(\ell+1)}=\boldsymbol{g}^{(\ell)}$, where:
\begin{align*}
    \boldsymbol{G}^{(\ell)}\in \mathbbm{R}^{J \times J}: \quad [\boldsymbol{G}^{(\ell)}]_{k_1 k_2}
    &=
    \text{Tr}
    \Bigg{[}
    \bigg{(}
    \sum\limits_{j=1}^J
    \beta^{(\ell)}_j 
    \boldsymbol{B}_j
    \bigg{)}
    \boldsymbol{B}_{k_1}
    \bigg{(}
    \sum\limits_{j=1}^J
    \beta^{(\ell)}_j 
    \boldsymbol{B}_j
    \bigg{)}
    \boldsymbol{B}_{k_2}
    \Bigg{]} \\
    \boldsymbol{g}^{(\ell)}\in \mathbbm{R}^{J}: \quad [\boldsymbol{g}^{(\ell)}]_{k_1 }
    &=
    \text{Tr}
    \Bigg{[}
    \bigg{(}
    \sum\limits_{j=1}^J
    \beta^{(\ell)}_j 
    \boldsymbol{B}_j
    \bigg{)}
    \boldsymbol{B}_{k_1}
    \bigg{(}
    \sum\limits_{j=1}^J
    \beta^{(\ell)}_j 
    \boldsymbol{B}_j
    \bigg{)}
    \boldsymbol{Y}^{(t+1)}
    \Bigg{]},
\end{align*} 
initialized by $\boldsymbol{\beta}^{(0)}$, until convergence. The final estimate $\hat{\boldsymbol{\beta}}$ forms the covariance $\hat{\ssigma}=\sum\limits_{j=1}^J \hat{\beta}_j\boldsymbol{B}_j$.

\end{document}